\renewcommand{\maketag@@@}[1]{\hbox{\m@th\normalsize\normalfont#1}}%
\newcommand{\cmmnt}[1]{} %注释
\newcommand*\emptycirc[1][1ex]{\tikz\draw (0,0) circle (#1);} 
\newcommand*\halfcirc[1][1ex]{%
	\begin{tikzpicture}
		\draw[fill] (0,0)-- (90:#1) arc (90:270:#1) -- cycle ;
		\draw (0,0) circle (#1);
\end{tikzpicture}}
\newcommand*\fullcirc[1][1ex]{\tikz\fill (0,0) circle (#1);} 
\def\BibTeX{{\rm B\kern-.05em{\sc i\kern-.025em b}\kern-.08em
    T\kern-.1667em\lower.7ex\hbox{E}\kern-.125emX}}
\newtheorem{lemma}{Lemma}
\newtheorem{Def}{Definition}
\definecolor{lime}{HTML}{A6CE39}
\begin{document}
\title{Optimal Hub Placement and Deadlock-Free Routing for Payment Channel Network Scalability}

\author{\IEEEauthorblockN{Lingxiao Yang$^\ast$, Xuewen Dong$^{\ast,}$\textsuperscript{\Letter}\thanks{\textsuperscript{\Letter} Xuewen Dong is the corresponding author. Permission to make digital or hard copies of all or part of this work for personal or classroom use is granted without fee provided that copies are not made or distributed for profit or commercial advantage and that copies bear this notice and the full citation on the first page. Copyrights for components of this work owned by others than the author(s) must be honored. To copy otherwise, or republish, to post on servers or to redistribute to lists, requires prior specific permission. Request permissions from pubs-permissions@ieee.org. \textit{ICDCS '23, July 18-21, 2023, Hong Kong, China.} \copyright \  2023 Copyright held by the owner/author(s). Publication rights licensed to IEEE.}, Sheng Gao$^\dagger$, Qiang Qu$^\ddagger$, Xiaodong Zhang$^\ast$, Wensheng Tian$^\S$, Yulong Shen$^\ast$}
	\IEEEauthorblockA{$^\ast$School of Computer Science and Technology, Xidian University, Xi'an, China, \\
	lxyang@stu.xidian.edu.cn, \{xwdong, zhangxiaodong\}@xidian.edu.cn, ylshen@mail.xidian.edu.cn \\
	$^\dagger$School of Information, Central University of Finance and Economics, Beijing, China. sgao@cufe.edu.cn \\
	$^\ddagger$Shenzhen Institute of Advanced Techbology, Chinese Academy of Sciences;\\ Huawei Blockchain Lab, Huawei Cloud Tech Co., Ltd, Shenzhen, China. quqiang4@huawei.com \\
	$^\S$Nanhu Lab, Jiaxing, China. tws@nanhulab.ac.cn
	}
}

\maketitle

\begin{abstract}
As a promising implementation model of payment channel network (PCN), payment channel hub (PCH) could achieve high throughput by providing stable off-chain transactions through powerful hubs. However, existing PCH schemes assume hubs are preplaced in advance, not considering payment requests' distribution and may affect network scalability, especially network load balancing. In addition, current source routing protocols with PCH allow each sender to make routing decision on his/her own request, which may have a bad effect on performance scalability (e.g., deadlock) for not considering other senders' requests. This paper proposes a novel multi-PCHs solution with high scalability. First, we are the first to study the PCH placement problem and propose optimal/approximation solutions with load balancing for small-scale and large-scale scenarios, by trading off communication costs among participants and turning the original NP-hard problem into a mixed-integer linear programming (MILP) problem solving by supermodular techniques. Then, on global network states and local directly connected clients' requests, a routing protocol is designed for each PCH with a dynamic adjustment strategy on request processing rates, enabling high-performance deadlock-free routing. Extensive experiments show that our work can effectively balance the network load, and improve the performance on throughput by 29.3\% on average compared with state-of-the-arts.
\end{abstract}

\begin{IEEEkeywords}
Payment Channel Network, Placement, Routing, Scalability.
\end{IEEEkeywords}

\vspace{-0.2cm}
\section{Introduction}\label{sec:introduction}
\vspace{-0.1cm}
Cryptocurrencies are gaining popularity in the financial ecosystem. However, the scalability issues of their underlying blockchain technology are still challenging. Since each transaction needs to be confirmed by the consensus mechanism, this can take several minutes to hours. Instead of continually improving the design of the consensus mechanism, a leading layer-2 proposal for addressing the scalability challenge relies on off-chain payment channels\cite{LN, RD}. The core idea is to move mass transactions submitted on-chain to off-chain and execute them securely using a locking mechanism. Only the key steps (e.g., resolving disputes, opening/closing channels) are put on-chain for confirmation.

Multiple payment channels among nodes constitute a \textit{payment channel network} (PCN). Two nodes without the direct payment channel can conduct off-chain transactions through intermediaries routing. While appealing, PCNs raise the issue of finding paths by the senders and maintaining the network topology. Furthermore, the collateral deposited in a channel cannot be used anywhere else in a bounded time. The above reasons prompted the design of TumblerBit \cite{Ethan2017TumbleBit}, which first proposes the \textit{payment channel hubs}.

\textit{Payment channel hubs (PCHs)} \textit{are the untrusted intermediaries that allow participants to make fast, anonymous, off-chain payments} \cite{Ethan2017TumbleBit}. The basic idea is that each participant opens a channel with a PCH. The PCH mediates payments between senders and recipients and gains a fee. Although this scheme sacrifices the fully decentralized nature of blockchain, it significantly improves the performance on the premise of verifiable security \cite{Tairi}. Blockchain has been compromising decentralization in the increasingly urgent need for high availability. For example, EOS \cite{EOS} has compromised from decentralization to multi-centralization.

\iffalse
\begin{figure}[t]
	\vspace{-0.6cm}
	\centering
	\subfigure[LN1]{
		\begin{minipage}[t]{0.43\linewidth}
			\centering
			\includegraphics[width=1in]{pic/LN4}
			\label{ln1}
		\end{minipage}%
	}%
	\hspace{1mm}
	\subfigure[LN2]{
		\begin{minipage}[t]{0.43\linewidth}
			\centering
			\includegraphics[width=1.55in]{pic/LN3}
			\label{ln2}
		\end{minipage}%
	}%
	\centering
	\vspace{-0.3cm}
	\caption{LN.}
	\label{LN}
	\vspace{-0.6cm}
\end{figure}
\fi
\begin{table*}[t]
	\vspace{-10mm}
	\begin{center}
		\caption{State-of-the-art PCN scalable schemes}	\vspace{-0.3cm}
		\resizebox{2\columnwidth}{!}{
			\begin{threeparttable}
				\begin{tabular}{l | c c c c c c | c}
					\hline
					\diagbox{Properties}{Literatures} &\makecell[c]{Lighting \cite{LN}, Raidon \cite{RD}\\ Flare \cite{Flare2016}, Sprites (FC '19) \cite{0001BBKM19}}
					&REVIVE (CCS '17) \cite{KhalilG17} &Spider (NSDI '20) \cite{Sivaraman2020HighTC} &Flash (CoNEXT '19) \cite{WangXJW19} &\makecell[c]{TumbleBit (NDSS '17) \cite{Ethan2017TumbleBit}\\ A$^2$L (S\&P '21) \cite{Tairi}} &\makecell[c]{Perun (S\&P '19) \cite{Dziembowski2019Perun}\\ Commit-Chains \cite{Rami2018CommitChains}} &\makecell[c]{Splicer \\ (This work)}\\
					\hline
					Improving throughput &\fullcirc &\fullcirc &\fullcirc &\fullcirc &\fullcirc &\fullcirc &\fullcirc\\
					
					Support large transactions &\emptycirc &\emptycirc &\fullcirc &\fullcirc &\emptycirc &\emptycirc &\fullcirc\\
					
					Payment channel balance &\emptycirc &\fullcirc &\fullcirc &\emptycirc &\emptycirc &\emptycirc &\fullcirc\\
					
					Deadlock-free routing &\emptycirc &\halfcirc &\halfcirc &\emptycirc &\emptycirc &\emptycirc &\fullcirc\\
					
					Transaction unlinkability &\emptycirc &\emptycirc &\emptycirc &\emptycirc &\fullcirc &\emptycirc &\fullcirc\\
					
					Optimal hub placement &\emptycirc &\emptycirc &\emptycirc &\emptycirc &\emptycirc &\emptycirc &\fullcirc\\
					\hline
				\end{tabular}	
				\iffalse
				\begin{tablenotes}
					\footnotesize
					\item[1] Green marked schemes improve routing strategies; $^2$ Gray marked schemes introduce payment channel hubs; $^3$ These schemes provide part of this property.  %此处加入注释*信息 %这样不必每条换行 \rowcolor{lightgray}行颜色
				\end{tablenotes}
				\fi
				\label{table_dif}
			\end{threeparttable}
		}
	\end{center}
	\vspace{-0.8cm}
\end{table*}

\textbf{Motivation.} With the increase in usage frequency, the overall load of the PCN rises rapidly, and load imbalance phenomena happen gradually. When the blockchain community\footnote{Like all public blockchains, our solution is also community-autonomous. This means that all members have equal rights in decision-making, which requires a 67\% majority approval.} decides to upgrade or design a new PCN for large-scale usage scenarios, the overall load of the PCN system should be considered. The inappropriate placement of PCHs in PCNs can easily lead to an unbalanced network communication load. However, as shown in Table \ref{table_dif}, the existing scalable schemes (e.g., \cite{LN, RD,Flare2016, 0001BBKM19, KhalilG17, Sivaraman2020HighTC, WangXJW19}) mainly study the improvement of routing strategies to improve performance. The placement problem has never been discussed, which is the \textit{first flaw}. The current source routing they adopted requires each sender to maintain a complete PCN topology and independently compute routes according to his/her own requests \cite{Sivaraman2020HighTC, WangXJW19}. In large-scale networks, the senders' performance is severely challenged, and without coordinating other requests can easily cause deadlocks. Existing PCHs (e.g., \cite{Ethan2017TumbleBit, Tairi, Dziembowski2019Perun, Rami2018CommitChains}) inherit the above source routing flaw and design complex cryptographic primitives to provide unlinkability \cite{Ethan2017TumbleBit, Tairi} for off-chain payments to obfuscate the relationship between transaction parties, with limited scalability improvements \cite{Dziembowski2019Perun, Rami2018CommitChains}, which is the \textit{second flaw}.

This paper presents \textit{Splicer}, a novel multi-PCHs solution with high scalability while inheriting unlinkability. In particular,  we propose the PCHs called the \textit{smooth nodes} for routing computations. The scalability of Splicer is two-fold: (i) Our network scale is scalable, allowing more clients to access the system through a variable number of PCHs. Meanwhile, we study the PCH placement problem to achieve optimal network communication load balancing. (ii) Splicer provides performance scalability. We design a rate-based routing protocol to perform multi-path payments by splitting transactions, achieving a high \textit{transaction success ratio} and \textit{throughput}. It proves that the network funds flow smoothly and the network is nearly deadlock-free. The \textbf{\textit{insight}} of Splicer is to seek a new tradeoff between decentralization and scalability for PCNs\footnote{Our system name Splicer is derived from this insight, which rationally connects the clients in PCNs with multiple PCHs.}.

\textbf{Challenges.} Splicer addresses several challenges: (i) \textit{PCH placement modeling and solving.} Since the clients in PCNs are scattered geographically, the desired properties are challenging to define formally. In addition, the method of solving the model may change as the scale of PCNs increases. (ii) \textit{A scalable routing protocol for PCHs.} Naive approaches (e.g., shortest-path routing) may lead to underutilizing funds or deadlock in some channels because the transaction always flows on the shortest paths. Additionally, previous instant and atomic routing \cite{LN, RD} would cause the payment value to be limited by channel funds.

\textbf{Contributions.} We make the following contributions:
\begin{itemize}[leftmargin=*]
\item[$\bullet$] \textbf{Globally optimal PCH placement.} Multiple PCHs make distributed routing computations for client payment requests to balance the network load. We model the PCH placement problem to minimize the communication delay and overhead. Besides, we propose two solutions for the placement optimization problem in small-scale and large-scale networks.

\item[$\bullet$] \textbf{Deadlock-free routing protocol.} We design a scalable rate-based routing mechanism for the multiple PCHs, allowing large-value transactions to be completed on low-capacity channels in multi-path and enabling high-throughput without disrupting the balance of channel funds. Additionally, we consider congestion control to adjust the transaction flow rates in the PCNs further.

\item[$\bullet$] \textbf{Evaluation.} We implement Splicer using Lightning Network Daemon (LND) testnet\cite{lnd}. We simulate the PCHs placement model and conduct extensive experiments to evaluate the performance of Splicer. The results show that Splicer can effectively balance the network load, and improve the performance on transaction success ratio by 42\% and throughput by 29.3\% on average compared with the state-of-the-arts.
\end{itemize}
\iffalse
\noindent\textbf{Outline.} The rest of this paper proceeds as follows. Section \ref{section:Relatedwork} reviews related work. Section \ref{pre} presents some preliminaries about PCNs. Section \ref{sec_Problem} states the problems in this work. In Section \ref{sec_SD}, we present the optimization PCHs placement problem solutions and the design of the routing protocol. We show the experimental results in Section \ref{Section:Evaluation}. Finally, Section \ref{conclu} concludes the paper.
\fi

%\input{2-relatedwork}
\section{Preliminaries} \label{pre}
%In this section, we provide the preliminaries involved in our system.
\subsection{Payment Channel Networks}
Fig. \ref{fig_PCN1} shows an example of payment channel networks in which $(A, C)$ and $(C, B)$ have established a bidirectional payment channel, respectively. Each direction of the bidirectional payment channels has deposited ten tokens. So a virtual payment channel \cite{Dziembowski2019Perun} is formed between $A$ and $B$. $C$ relays the payment if $A$ wants to send five tokens to $B$. Thus two transactions occur, $A$ to $C$ and $C$ to $B$. As an incentive to participate, $C$ receives a forwarding fee. A key challenge is ensuring that $C$ forwards the correct amount of tokens. Thus, the cryptographic \textit{hash time lock contract} (HTLC)\cite{LN} is proposed to guarantee that $C$ can get the tokens paid by $A$ on the channel $(A, C)$ only after $C$ has successfully paid to $B$ within a bounded time in the channel $(C, B)$.

\begin{figure}[h]
	\vspace{-0.6cm}
	\subfigure[A simple PCN.]{
		\begin{minipage}[t]{0.3\linewidth}
			\centering
			\includegraphics[width=1in]{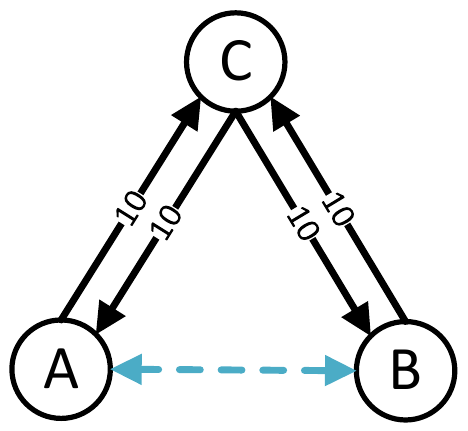}
			\label{fig_PCN1}
			\vspace{-1cm}
		\end{minipage}%
	}%
	\subfigure[The initial state.]{
		\begin{minipage}[t]{0.3\linewidth}
			\centering
			\includegraphics[width=1in]{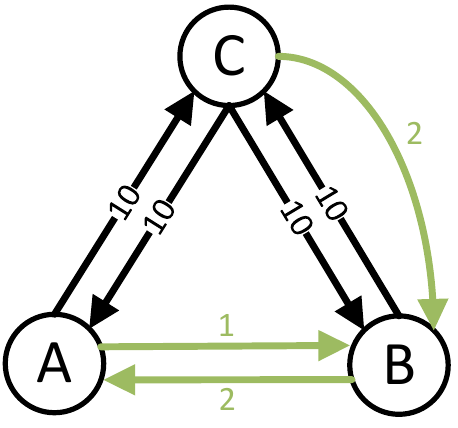}
			\label{fig_PCN2}
			\vspace{-1cm}
		\end{minipage}%
	}%
	\subfigure[A deadlock at \textit{C}.]{
		\begin{minipage}[t]{0.3\linewidth}
			\centering
			\includegraphics[width=1in]{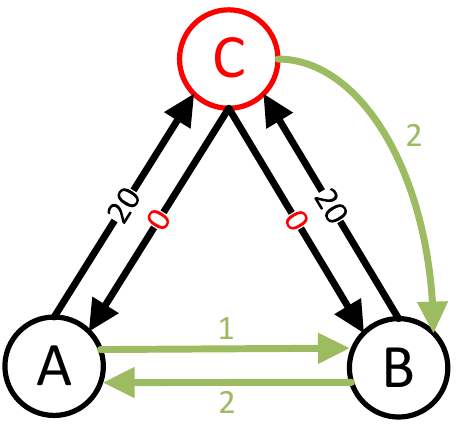}
			\label{fig_PCN3}
			\vspace{-1cm}
		\end{minipage}
	}%
	\centering
	\vspace{-0.3cm}
	\caption{Examples illustrating the PCNs.}
	\vspace{-0.6cm}
\end{figure}

\subsection{Local Deadlocks in PCNs}\label{deadlock}
To illustrate the local deadlocks in PCNs, we consider an initial state in Fig. \ref{fig_PCN2} that $A$ and $C$ transfer funds to $B$ at a rate of 1 and 2 tokens/sec, respectively. $B$ transfers funds to $A$ at a rate of 2 tokens/sec. It is noted that the payment rates we specify are not balanced, which causes net funds to flow out of $C$ and into $A$ and $B$. Payment channels are balanced by ten tokens in each direction. We suppose that the transactions only flow between $A$ and $B$. The system can achieve a total throughput of 2 tokens/sec only by allowing $A$ and $B$ to transfer to the opposite at a rate of 1 token/sec. But once the payments are executed as described in the initial state, the system throughput ends up at zero. This is because $C$ sends $B$ funds faster than its funds being replenished by $A$, reducing its transferable funds to zero, as shown in Fig. \ref{fig_PCN3}. However, $C$ needs positive funds to route transactions between $A$ and $B$. The transactions between $A$ and $B$ cannot be processed, even though they have sufficient funds. Thus the network goes into a \textit{deadlock} state.

\iffalse
\begin{figure}[h]
	\vspace{-0.2cm}
	\centering
	\includegraphics[width=3in]{pic/system}
	\vspace{-0.3cm}
	\caption{System model.}
	\label{fig_model}
	\vspace{-0.3cm}
\end{figure}
\fi

\section{Splicer: Problem Statement}\label{sec_Problem}
\iffalse In this section, we introduce the concept of Splicer, where we outline the system model, topology and workflow, state the trust, communication and threat models, sketch the placement and routing problems, finally present our design goals.\fi
\subsection{System Model}\label{sec_SM}
\textbf{Entities.} There are two types of entities in Splicer:
\begin{itemize}[leftmargin=*]
\item[$\bullet$] \textbf{Clients} are the end-users in PCNs who can send or receive a payment. We expect the clients to be lightweight, allowing mobile or IoT devices to make payments. Clients outsource the payment routing computation to smooth nodes. Each client interacts with a unique smooth node.

\item[$\bullet$] \textbf{Smooth nodes} process payment requests from clients by running routing protocol. Each smooth node makes path decisions for the current payment requests of its directly linked clients. Besides, multiple smooth nodes form a key management group (KMG) to create or retrieve keys with any distributed key generate protocol \cite{GRJ1999}.
\end{itemize}

\textbf{Topology.} Fig. \ref{PCH1} shows the star-like topology in state-of-the-art PCHs \cite{Ethan2017TumbleBit, Dziembowski2019Perun, Tairi}. A PCH opens payment channels with multiple clients separately. Thus, clients need one-hop routing for mutual payments through the intermediate PCH \cite{Rami2018CommitChains}. As shown in Fig. \ref{PCH2}, in Splicer we model the PCN with a \textit{multi-star-like topology}, in which the clients evenly connect to the smooth nodes (PCHs). Fig. \ref{PCH2} shows an example where the client consists of \textit{N} senders and \textit{M} recipients, and PCHs include \textit{Z} smooth nodes. We define the multi-star-like PCN topology as follows.
\begin{Def} \label{FedPCN}
	(Multi-star-like PCN topology). In a PCN, there are multiple PCHs connected directly or indirectly. The clients trade with each other through the PCHs that are directly connected to them.
\end{Def}
%\iffalse
\noindent Notice that the payment hub model has been widely adopted in PCNs (e.g., \cite{Ethan2017TumbleBit, Tairi, Rami2018CommitChains, Dziembowski2019Perun}). However, we are the first to propose a multi-star-like PCN with multiple PCHs.
%\fi

\begin{figure}[t]
	\vspace{-0.8cm}
	\centering
	\subfigure[Star-like topology.]{
		\begin{minipage}[t]{0.5\linewidth}
			\centering
			\includegraphics[width=1.5in]{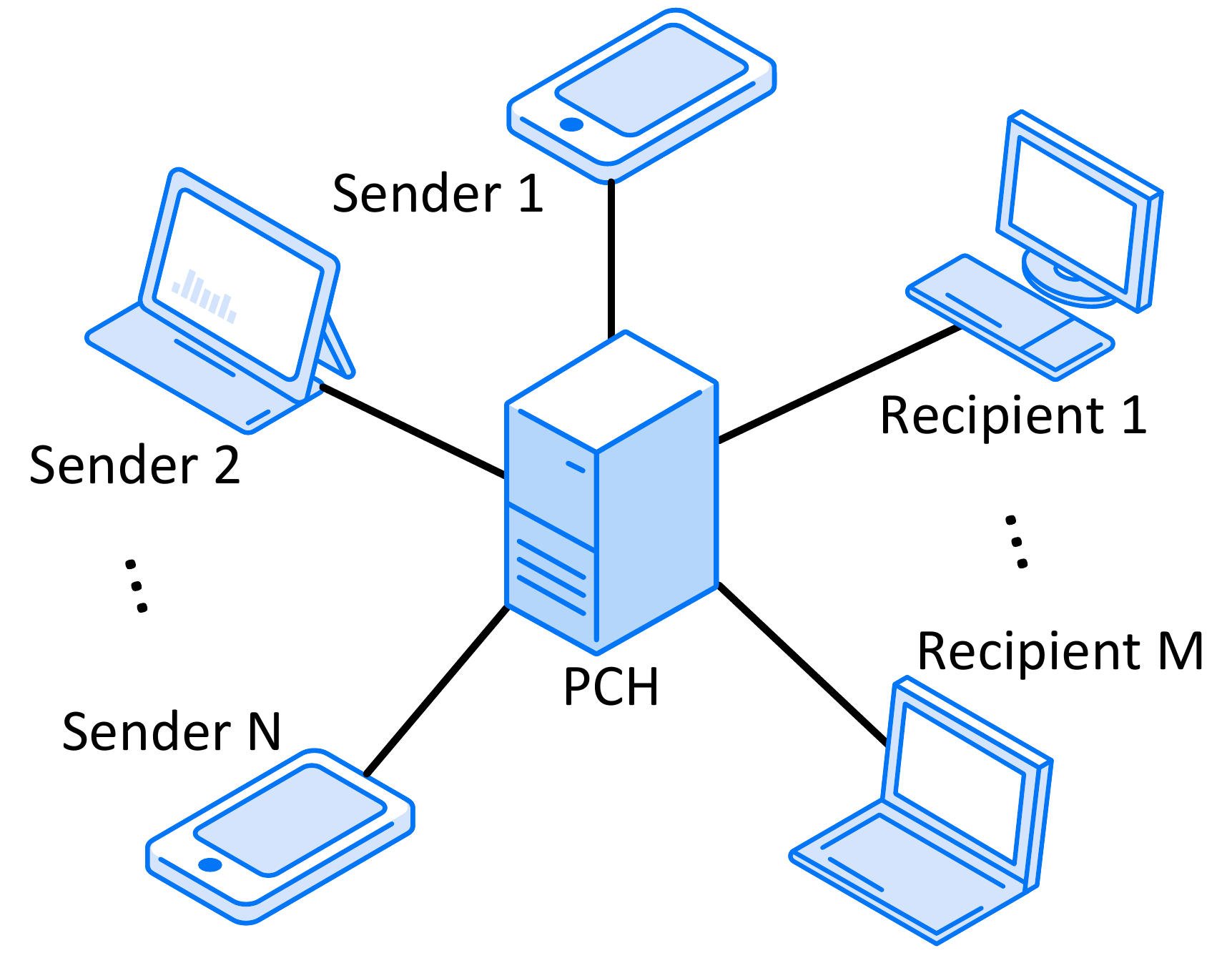}
			\label{PCH1}
		\end{minipage}%
	}%
	\subfigure[Multi-star-like topology.]{
		\begin{minipage}[t]{0.5\linewidth}
			\centering
			\includegraphics[width=1.5in]{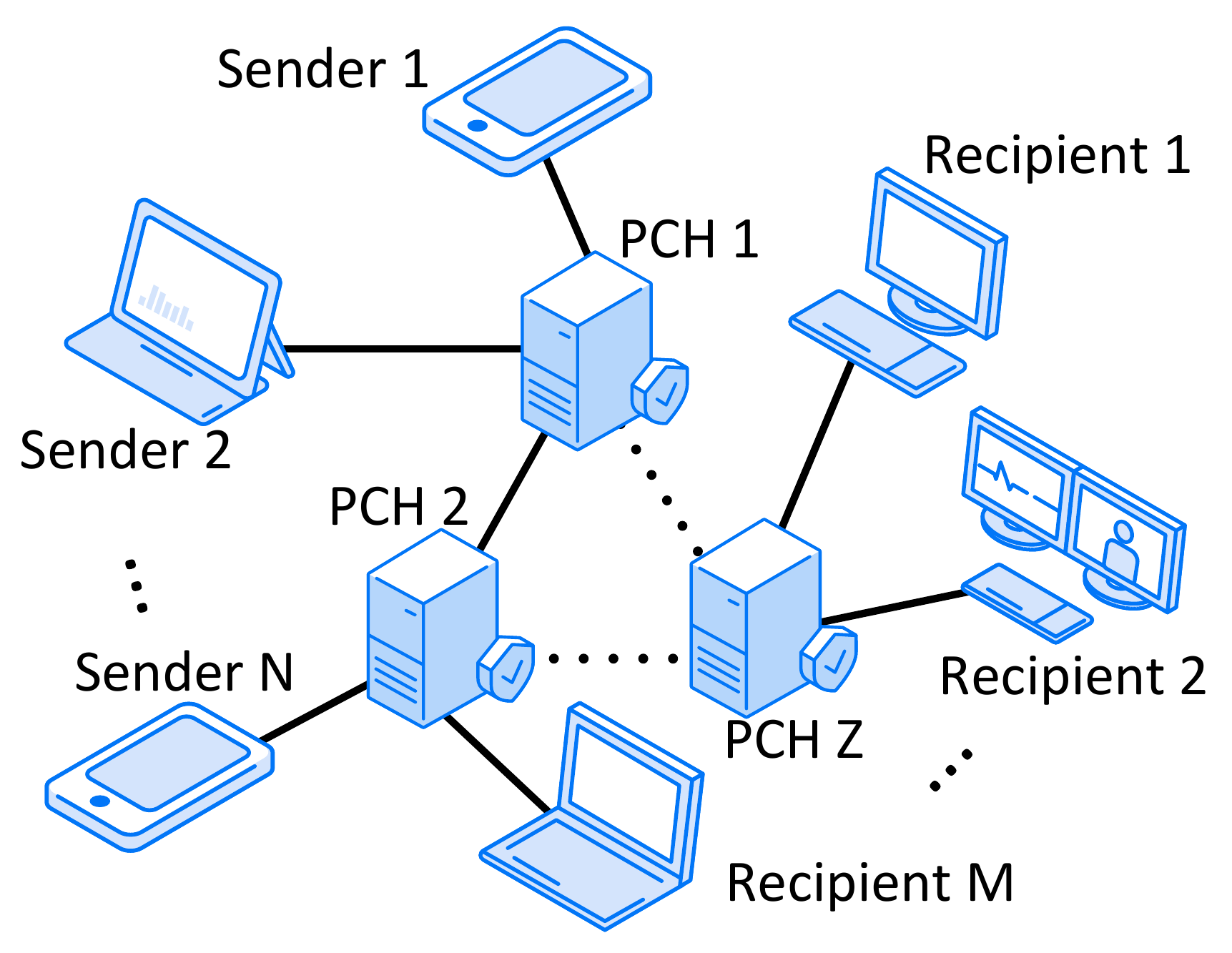}
			\label{PCH2}
		\end{minipage}
	}%
	\centering
	\vspace{-0.2cm}
	\caption{PCH topologies.}
	\vspace{-0.6cm}
\end{figure}

\textbf{Workflow.} \iffalse Observing our system topology, we can summarize two working modes. (i) \textit{One-hop mode.} A sender connects with a recipient through a single smooth node, so the payment is only forwarded through a one-hop route. (ii) \textit{Multi-hop mode.} A sender and a recipient are routed through multiple smooth nodes: a) In a small-scale network, the two smooth nodes connected by the sender and the recipient may not consider the channel funds and load balancing, so directly forward transactions (i.e., two-hop routing). b) In a large-scale network, the above two smooth nodes need to run our routing protocol to perform multi-hop routing to maintain system balance. We only provide workflow in multi-hop mode for simplicity, and it is easy to derive in the one-hop and two-hop modes.

\fi A PCN can be modeled as a graph $\mathbb{G}=(\mathbb{V}, \mathbb{E})$ in which $\mathbb{V}$ denotes the set of nodes, and $\mathbb{E}$ denotes the set of payment channels between them. $\mathbb{V}_{\rm CLI} \subseteq \mathbb{V}$ denotes the clients, and $\mathbb{V}_{\rm SN} \subseteq \mathbb{V}$ denotes the smooth nodes, where $\mathbb{V}_{\rm CLI}=\mathbb{V}-\mathbb{V}_{\rm SN}$. As shown in Fig. \ref{workflow}, there is a payment from client $P_s$ to client $P_r$, $S_i$ and $S_j$ are the smooth nodes to which they are connected, for $P_s, P_r \in \mathbb{V}_{\rm CLI}$ and $S_i, S_j \in \mathbb{V}_{\rm SN}$. KMG contains $\iota$ smooth nodes, where $\iota$ is a system parameter. We now sketch the payment preparation and execution workflow.

\textit{a) Payment preparation:} For simplicity, we omit the detailed process of creating payment channels, and the channels' initial deposits are sufficient, similar to Ref. \cite{LiMZ20, Ge2022ShadufNP}. Now $P_s$ establishes a secure communication via transport layer security (TLS) protocol with $S_i$, so do $P_r$ and $S_j$. Then a payment channel is created between $P_s$ and $S_i$, so do $P_r$ and $S_j$. Next, $P_s$ initiates a payment request $\textsf{pay}_{\rm{req}}$ to $S_i$ via the secure channel so that $S_i$ knows the client has a new transaction to execute. Then $S_i$ starts the \textit{payment initialization}. $S_i$ creates a fresh transaction id \textsf{tid} and obtains fresh ($\textsf{pk}_{\textsf{tid}}$, $\textsf{sk}_{\textsf{tid}}$) pair from the KMG. $S_i$ sends \textsf{tid} and the corresponding public key  $\textsf{pk}_{\textsf{tid}}$ to $P_s$, and $S_i$ keeps $\textsf{sk}_{\textsf{tid}}$ private. Then $S_i$ generates the initial state $\textsf{state}_{\textsf{tid}}:=(\textsf{tid}, \theta_{\textsf{tid}})$, a tuple containing \textsf{tid} and a boolean $\theta_{\textsf{tid}}$ indicating whether the transaction is completed. 

\textit{b) Payment execution:} We illustrate the steps of payment execution in Fig. \ref{workflow} as follows:

\textbf{(1)} To start the process of executing a transaction \textsf{tid} with input \textsf{inp} which contains $D_{\textsf{tid}}$. Let $D_{\textsf{tid}}=(P_s, P_r, \textsf{val}_{\textsf{tid}})$ denote the payment demand of $P_s$, where $\textsf{val}_{\textsf{tid}}$ represents the payment amount. $P_s$ first computes $\textsf{inp}=\textsf{Enc}(\textsf{pk}_{\textsf{tid}}, D_{\textsf{tid}})$, then sends to $S_i$ a message $(\textsf{tid}, \textsf{inp})$ and the payment funds.

\textbf{(2-3)} $S_i$ decrypts $\textsf{inp}$ with $\textsf{sk}_{\textsf{tid}}$ to obtain $D_{\textsf{tid}} = \textsf{Dec}(\textsf{sk}_{\textsf{tid}}, \textsf{inp})$. Then the routing protocol splits the $D_{\textsf{tid}}$ into $K$ transaction-units (TUs) $D_{\textsf{tuid}}$ with fresh id \textsf{tuid}, generates the corresponding states $\textsf{state}_{\textsf{tuid}}^i =(\textsf{tuid}, \theta_{\textsf{tuid}}^{i})$, for $2\leq i\leq K$, and $\theta_{\textsf{tid}}=\bigwedge_{2\leq i\leq K}\theta_{\textsf{tuid}}^{i}$. From the KMG, $S_j$ obtains ($\textsf{pk}_{\textsf{tuid}}, \textsf{sk}_{\textsf{tuid}}$) pair. $S_i$ sends $\textsf{Enc}(\textsf{pk}_{\textsf{tuid}}, D_{\textsf{tuid}})$ to $S_j$, then $S_j$ decrypts it with $\textsf{sk}_{\textsf{tuid}}$. Once $S_j$ receives the funds corresponding to $D_{\textsf{tuid}}$, it returns to $S_i$ an acknowledgment $\textsf{ACK}_{\textsf{tuid}}$ via secure channel, with which $S_i$ updates the $\textsf{state}_{\textsf{tuid}}^i$ as $\theta_{\textsf{tuid}}^{i}=\textsf{true}$. When all acknowledgments $\textsf{ACK}_{\textsf{tuid}}$ are received, $S_i$ updates the $\textsf{state}_{\textsf{tid}}$ as $\theta_{\textsf{tid}}=\textsf{true}$.

\textbf{(4)} Finally, $S_j$ receives all TUs of $D_{\textsf{tid}}$, and sends the payment funds to $P_r$ at one time. $P_r$ generates a successful receipt acknowledgment $\textsf{ACK}_{\textsf{tid}}$, which is finally forwarded to $P_s$ by the smooth nodes.

In fact, in the above workflow, any transaction initiator needs to pay an additional forwarding fee to the intermediaries on the routing path, which is used as a forwarding incentive for smooth nodes, see \S \ref{protocol}.

\begin{figure}[t]
	\vspace{-0.8cm}
	\centering
	\includegraphics[width=3in]{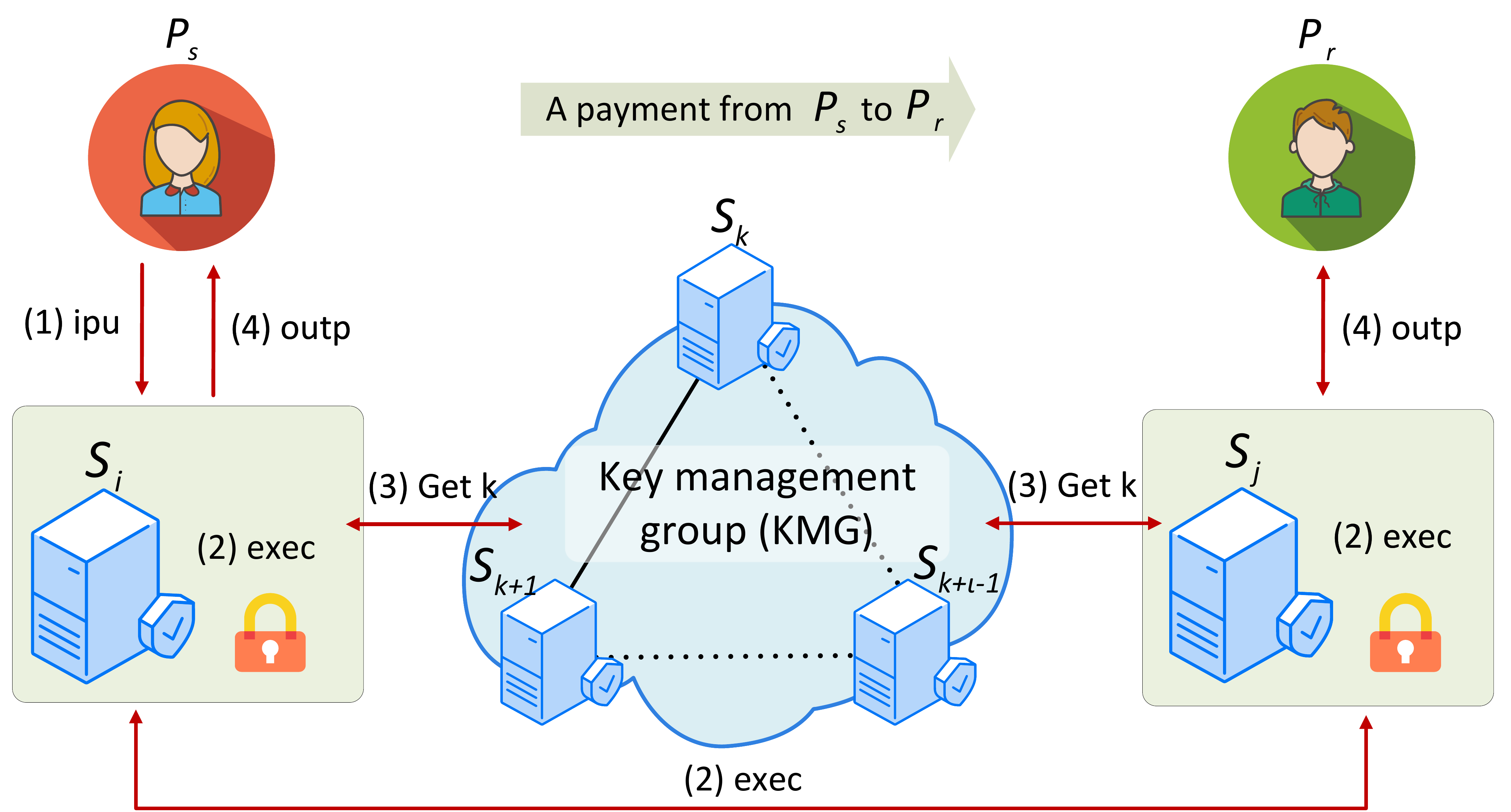}
	\vspace{-0.3cm}
	\caption{Workflow in multi-hop working model.}
	\label{workflow}
	\vspace{-0.7cm}
\end{figure}

\subsection{Trust, Communication and Threat Models}\label{privacy}
\textbf{Trust model.} Splicer is community-autonomous, and there is a certain trust transference between entities. In Fig. \ref{trust}, Splicer runs a multiwinner voting algorithm (e.g., \cite{CelisHV18}) in the smart contract that effectively allows all entities to fairly select a \textit{smooth node candidate list} in a long period. It can take into account the two properties: (i) \textit{Excellence} means the selected candidates are ``better" for outsourcing routing tasks (e.g., have more client connections, transaction funds, and lower operational overhead). (ii) \textit{Diversity} means that the candidate positions are as diverse as possible. We leave the optimal design of multiwinner voting for future work.

The first selected candidate list of smooth nodes temporarily performs payment routing as actual PCHs. When the network state is stable, the candidate smooth nodes run a smart contract containing a placement optimization algorithm to determine the actual PCHs (long term running). Notice that after the distribution of transaction requests is stabilized in the network, the overall distribution information of requests obtained by each candidate PCH is consistent, and the actual PCHs finally decided are consistent.

Fig. \ref{trust} shows that this process is accompanied by the removal of redundant payment channels, thus reducing the complexity of the network. Actual PCHs require pledging funds to a public pool for access, and their behavior checks and balances each other; if some PCHs appear malicious or colluding, they will be identified by other PCHs. Splicer also provides the client with a reporting and arbitration mechanism. The malicious PCHs will be removed, and their deposit will be confiscated as a punishment (the loss is greater than the profit). Then new PCHs will be selected from the new candidate list to supplement, so rational PCHs will not choose corruption. We emphasize that sensitive information (e.g., node identity, transaction content, routing data) is transmitted in ciphertext, coupled with unlinkability, so the client does not need to worry about privacy leakage.

\begin{figure}[t]
	\vspace{-0.8cm}
	\centering
	\includegraphics[width=3.3in]{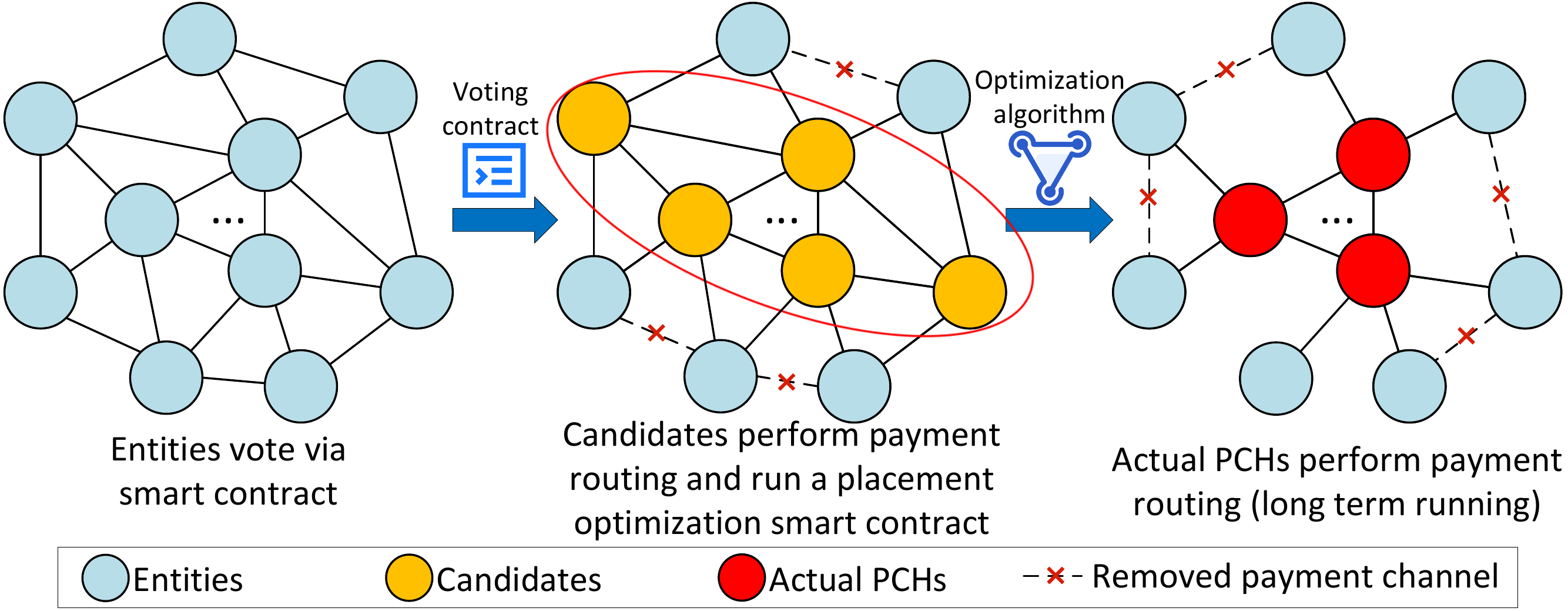}
	\vspace{-0.3cm}
	\caption{System trust transference model.}
	\label{trust}
	\vspace{-0.6cm}
\end{figure}

\textbf{Communication model.} As shown in Fig. \ref{com}, we sketch the communication process. Splicer runs under the bounded synchronous communication setting. At the beginning of epoch $e$+1, PCHs obtain and synchronize the final global information of the last epoch, including clients' states and network data (e.g., topology, channel state, payment flow rate, etc). Meanwhile,  after receiving directly connected clients' local payment requests, each PCH makes distributed routing decisions based on the network data (final global information of epoch $e$) and its clients' new requests (local information of epoch $e$+1). Finally, recipients generate the payment acknowledgments, which PCHs forward to senders. Splicer loops the above process.

\textbf{Threat model.} Each PCH is rational and potentially malicious, deviating from the protocol to obtain benefits. An adversary may compromise a target PCH's operating system and network stack, which can arbitrarily drop, delay, and replay messages. Since the adversary would not profit from corrupting the PCH placement process, the adversary's attack may only cause the payment routing of some transactions to fail. However, the failed transactions would be withdrawn by PCH without causing losses to the client or affecting the system's stability.

\iffalse
In our and existing PCN schemes, the sender's data travels within the PCN through some intermediate nodes to the recipient. This may lead to the disclosure of users' privacy. Specifically, users' private information that may be compromised include: the identity of the sender/recipient; channel balance (funds); and payment information.  

We consider two adversary models in the PCNs that can threaten users' privacy. The first is the \textit{honest-but-curious attacker}, who honestly acts when running the protocol but passively gathers information during the operation. The second is the \textit{malicious attacker}, who controls multiple nodes to deviate from the protocol. Fig. \ref{fig_attacker} shows where these attackers can situate in the PCNs. \textit{Attacker1} is on the payment path. \textit{Attacker2} is not on a specific path but can partially observe channel changes in the PCNs. \textit{Attacker3} can collude with other nodes through the packet timing analysis to acquire users' privacy.

\begin{figure}[h]
	\vspace{-0.3cm}
	\centering
	\includegraphics[width=3in]{pic/attacker}
	\vspace{-0.2cm}
	\caption{Attackers in the PCNs.}
	\label{fig_attacker}
	\vspace{-0.52cm}
\end{figure}
\fi

\vspace*{-0.5\baselineskip}
\subsection{PCH Placement and Routing Problems}\label{Sec_P}

\textbf{Placement problem.} The core of placement problem is to select the actual PCHs (fixed long-period running placement optimization smart contract) in the smooth node candidate list and make them install and run the PCH program for payment routing. The placement optimization smart contract determines the number of actual PCHs and their location in the network. We emphasize that this is a community-autonomous process and not a centralized decision. In long-term stable operation, the actual PCHs do not change, unless the network is not in an optimal operating state after long-term operation (the result of the placement optimization smart contract output changes) or a malicious PCH is removed. In practice, the community weighs the costs and benefits to determine when a new placement problem should be addressed.

\begin{figure}[t]
	\vspace{-0.8cm}
	\centering
	\includegraphics[width=3.3in]{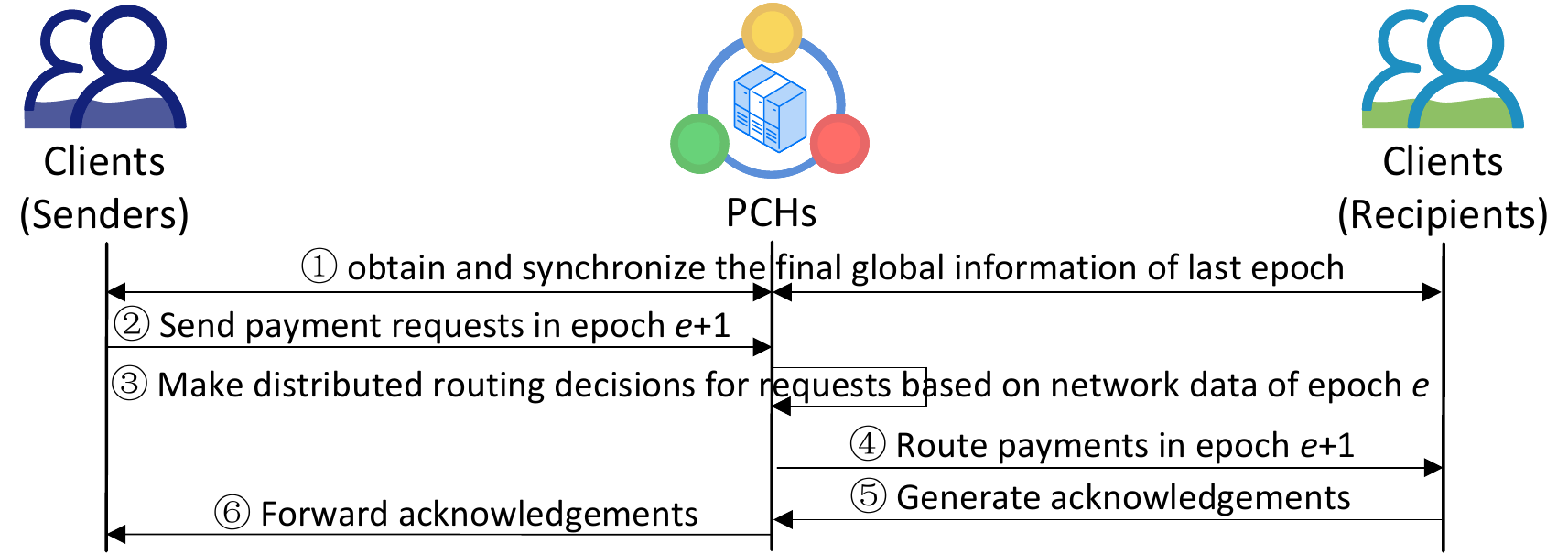}
	\vspace{-0.3cm}
	\caption{System communication process in epoch $e$+1.}
	\label{com}
	\vspace{-0.6cm}
\end{figure}

Since there are many geographically dispersed nodes in the real PCNs, PCHs may be far from some clients, leading to unstable connections or high communication delay and overhead between them. We aim to place the PCHs in proximity to the clients evenly, and all the clients have the lowest average payment hops forwarded by them. PCHs are physically distributed but logically polycentric and cooperative in managing payment routing. While such a placement strategy makes the distance between nodes shorter, it also considers the costs of PCHs collecting statistics from the clients and synchronizing between the PCHs. This creates a \textit{network load tradeoff}: (i) the PCHs should be near their routed clients to reduce the communication delay and route management overhead (\textbf{management cost}). (ii) they should be close to each other to reduce the delay and overhead of synchronizing states (\textbf{synchronization cost}). Thus the PCHs should be appropriately placed in a PCN, leading to a placement problem. We are the first to discuss the PCH placement problem in the PCNs, and we further describe the details of the placement problem in \S \ref{detail_problem} and the solutions in \S \ref{solutions}.

\textbf{Routing problem.} The existing routing solutions try to: (i) reduce routing costs to improve throughput and (ii) rebalance channel funds to improve routing performance. However, source routing requires each sender to compute the routing paths, which is limited in large-scale scenarios (Until April 11, 2023, the total number of nodes in the Lightning Network is 16,427. This paper defines a local network with more than 3,000 nodes as a large-scale network). We need to design a distributed routing decision protocol over multiple PCHs. Thus we propose a rate-based routing mechanism inspired by the ideas of packet-switching technology. Transactions are split into multiple independently routed TUs by each PCH. Each TU can transfer a few funds bounded by a \textit{Min-TU} and a \textit{Max-TU} value at different rates. We emphasize that this multi-path payment routing approach has proven to be feasible in Spider \cite{Sivaraman2020HighTC}. Notice that it does not affect the confidentiality of payments because each TU is encrypted with an independent public key from the KMG. In addition, the intermediate nodes of each TU routing path may be different, which confuses the relationship between two-party of multiple original transactions in PCNs. Thus, Splicer inherits the unlinkability of state-of-the-art PCHs. It is more difficult for intermediaries to identify sensitive information ciphertext. We elaborate on the rate-based routing protocol in \S \ref{protocol}.

\iffalse
\begin{figure}[t]
	\vspace{-0.6cm}
	\centering
	\includegraphics[width=3in]{pic/cost-example}
	\vspace{-0.3cm}
	\caption{Costs tradeoff in our model.}
	\label{cost-example}
	\vspace{-0.7cm}
\end{figure}
\fi

\begin{figure*}[t]
	\vspace{-0.8cm}
	\centering
	\includegraphics[width=6.5in]{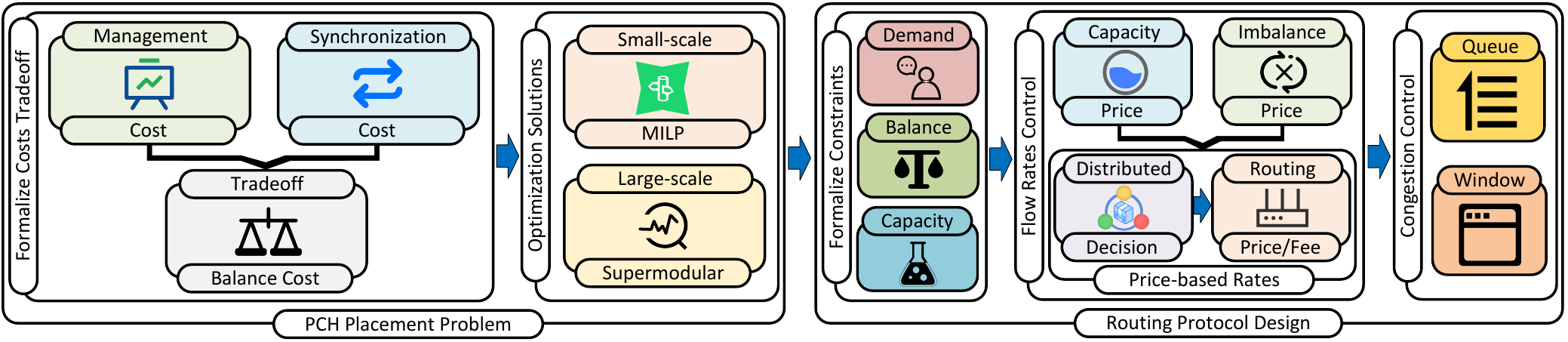}
	\vspace{-0.4cm}
	\caption{The overall structure of our design.}
	\label{system-overview}
	\vspace{-0.6cm}
\end{figure*}

\iffalse
Given the TUs acknowledgment delay of smooth nodes corresponding to recipients, if the intermediate nodes' channel lacks funds to be forwarded immediately, the TUs enter the waiting queues. Once an intermediate node receives funds from the other side, it uses them to forward TUs waiting in its queue, which improves the success ratio of routing transactions and supports transactions above channel capacity. We elaborate on the rate-based routing protocol in \S \ref{protocol}.

\vspace*{-0.5\baselineskip}
\subsection{Design Goals}\label{goals}
\textbf{Efficiency goal}. Since the placement of smooth nodes involves the management and synchronization costs, one of our aims is to minimize the communication costs in PCNs. Besides, routing costs are associated with each transaction, and we try to minimize that to achieve a high payment success ratio with lower channel funds in PCNs.

\textbf{Scalability goal}. In our design, we use a multi-center approach to outsource the routing of payments. The smooth nodes run the routing mechanism to split a transaction into TUs and select the appropriate paths to route them. Therefore, we aim to achieve a high transaction throughput and keep the balance of channel funds flow in PCNs with high availability.
\fi

\section{System Design}\label{sec_SD}
\iffalse In this section, we study the detailed design of Splicer. We first give an overview and then elaborate on the critical techniques of PCH placement problem and routing protocol.
\fi

\subsection{Overview}
The overall structure of our design is shown in Fig. \ref{system-overview}. We first consider the placement of smooth nodes. The placement problem is to seek a tradeoff between routing management and synchronization costs. Now we study the details of the smooth nodes placement problem, formulating it as an optimization problem with two costs tradeoff and translating it into minimizing the \textit{balance cost}. Then, we provide solutions in two PCN scales for the transformed optimization problem. In a small-scale network, we transform the placement problem into a \textit{mixed-integer linear programming} (MILP) problem to find the optimal solution. We use supermodular function techniques to find the approximate solution in a large-scale network.

Next, we study the details of routing protocol design for smooth nodes. We first give the formal constraints of the routing problem, including \textit{demand}, \textit{capacity}, and \textit{balance} constraints. Then we consider the transaction flow rate control based on the routing price. We define the \textit{capacity and imbalance prices}, derive the \textit{routing price and fee} through distributed decisions, and obtain the price-based flow rates. Finally, we consider congestion control during routing and design the waiting queue and window to alleviate congestion.

\subsection{Formalize the Placement Problem} \label{detail_problem}
We describe the long periodic election of smooth node candidate list in the trust model of \S \ref{privacy}. We briefly present the PCH placement problem in \S \ref{Sec_P}. Next, we further model how to determine the actual PCHs in the candidate list.

To formally describe the network load tradeoff, we let two binary variables $x_n, y_{mn} \in \left\lbrace 0, 1 \right\rbrace$ denote whether a candidate node $n \in \mathbb{V}_{\rm SNC}$ ($\mathbb{V}_{\rm SNC}$ denotes the set of candidate smooth nodes) can be placed as a smooth node and whether a client $m \in \mathbb{V}_{\rm CLI}$ is assigned to the smooth node $n$, respectively. Thus, the vectors $\boldsymbol{x}$ and $\boldsymbol{y}$ show the placement and assignment plans, respectively:

\begin{small}
\begin{gather}
	\boldsymbol{x} = (x_n \in \left\lbrace 0, 1 \right\rbrace : n \in \mathbb{V}_{\rm SNC}), \label{constraint13} \\
	\boldsymbol{y} = (y_{mn} \in \left\lbrace 0, 1 \right\rbrace : m \in \mathbb{V}_{\rm CLI}, \, n \in \mathbb{V}_{\rm SNC}). \label{constraint14}
\end{gather}
\end{small}

A node $n$ is not capable enough to be placed as a smooth node ($x_n = 0, \forall n \notin \mathbb{V}_{\rm SNC}$). Each client needs to be assigned to a smooth node, and we require $\sum_{n \in \mathbb{V}_{\rm SNC}} y_{mn} = 1, \forall m \in \mathbb{V}_{\rm CLI}$. A node $n$ must be placed as a smooth node so that client $m$ can be assigned ($y_{mn} \leq x_n, \forall m \in \mathbb{V}_{\rm CLI}, \, n \in \mathbb{V}_{\rm SNC}$).

Let $\zeta_{mn}$ and $\delta_{nl}$ denote the management cost of assigning a client $m \in \mathbb{V}_{\rm CLI}$ to a smooth node $n \in \mathbb{V}_{\rm SNC}$, and the synchronization cost between two smooth nodes $n, l \in \mathbb{V}_{\rm SNC}$, respectively. Notice that $\zeta_{mn}$ and $\delta_{nl}$ are local or edge-wise parameters probed by candidate smooth nodes at the last long period. Then the total management cost and synchronization cost in the network can be expressed as

%\begin{spacing}{0.3}

\begin{small}
\begin{gather}
\mathcal{C}_M(\boldsymbol{y}) = \sum_{m \in \mathbb{V}_{\rm CLI}}\sum_{n \in \mathbb{V}_{\rm SNC}} \zeta_{mn} y_{mn}, \label{cost1} \\
\mathcal{C}_S(\boldsymbol{x}, \boldsymbol{y}) = \sum_{n \in \mathbb{V}_{\rm SNC}}\sum_{l \in \mathbb{V}_{\rm SNC}} x_n x_l ( \delta_{nl} \sum_{m \in \mathbb{V}_{\rm CLI}} y_{mn} + \epsilon_{nl} ), \label{cost2}
\end{gather}
\end{small}
%\end{spacing}

\noindent where $\epsilon_{nl}$ denotes the constant cost in synchronization.

The tradeoff is transformed into a balance between the costs shown in equations (\ref{cost1})-(\ref{cost2}). Let $\omega \ge 0$ denote the weight value between the two costs, and the balance cost can be stated as

\begin{small}
\begin{gather}
\mathcal{C}_B(\boldsymbol{x}, \boldsymbol{y}) = \mathcal{C}_M(\boldsymbol{y}) + \omega\mathcal{C}_S(\boldsymbol{x}, \boldsymbol{y}).
\end{gather}
\end{small}

The PCH placement problem is shown as $\min \mathcal{C}_B(\boldsymbol{x}, \boldsymbol{y})$, where the constraints are formulas (\ref{constraint13})-(\ref{constraint14}). The problem is complex in that it contains discrete variables and a nonlinear objective function (\ref{cost2}) with cubic and quadratic terms, and is a typical NP-hard problem \cite{QinPIT18}.

\subsection{Optimization Placement Problem Solutions}\label{solutions}

\textbf{Small-scale optimal solution.} We convert the placement problem to a MILP problem to find the small-scale optimal solution. The conversion is vital since after turning into a problem with a linear objective function with constraints, it can be solved easily by existing various commercial solvers.

We use standard linearization techniques to achieve this conversion process. First, we introduce two vectors $\boldsymbol{\vartheta}$ and $\boldsymbol{\varphi}$ as the additional optimization variables 

\begin{small}
\begin{gather}
\boldsymbol{\vartheta} = (\vartheta_{nl} \in \left\lbrace 0, 1 \right\rbrace : n,l \in \mathbb{V}_{\rm SNC}), \\
\boldsymbol{\varphi} = (\varphi_{nlm} \in \left\lbrace 0, 1 \right\rbrace : n,l \in \mathbb{V}_{\rm SNC}, \, m \in \mathbb{V}_{\rm CLI}).
\end{gather}
\end{small}

Second, the linear constraints for $\vartheta$ and $\varphi$ are as follows 

\begin{small}
\begin{gather}
\vartheta_{nl} \le x_n,\quad\vartheta_{nl} \le x_l,\quad\vartheta_{nl} \ge x_n + x_l - 1,\quad n,l \in \mathbb{V}_{\rm SNC}, \label{constraint_v2} \\
\begin{split}
\varphi_{nlm} \le \vartheta_{nl},\quad\varphi_{nlm} \le y_{mn},\quad\varphi_{nlm} \ge \vartheta_{nl} + y_{mn} - 1,\\
\quad n,l \in \mathbb{V}_{\rm SNC}, \, m \in \mathbb{V}_{\rm CLI}. \label{constraint_v4}
\end{split}
\end{gather}
\end{small}

\noindent Where the constraints in (\ref{constraint_v2}) mean that if at least one $x_n$ and $x_l$ are 0, $\vartheta_{nl}$ is 0; otherwise, it is 1. Similarly, the constraints in (\ref{constraint_v4}) work on the same principle.

Third, we linearize the cost function (\ref{cost2}) using the new variables, and it can be converted to

\begin{small}
\begin{gather}
\widehat{\mathcal{C}}_S(\boldsymbol{\vartheta}, \boldsymbol{\varphi}) = \sum_{n \in \mathbb{V}_{\rm SNC}}\sum_{l \in \mathbb{V}_{\rm SNC}} ( \sum_{m \in \mathbb{V}_{\rm CLI}}\delta_{nl}\varphi_{nlm} + \epsilon_{nl}\vartheta_{nl} ). 
\end{gather}
\end{small}

Finally, the MILP can be stated as $\min \mathcal{C}_M(\boldsymbol{y}) + \omega\widehat{\mathcal{C}}_S(\boldsymbol{\vartheta}, \boldsymbol{\varphi})$, where the constraints are formulas (\ref{constraint13})-(\ref{constraint14}) and (\ref{constraint_v2})-(\ref{constraint_v4}).

Therefore, the PCH placement problem has been converted to a MILP problem and can be directly solved by existing commercial solvers. The various solvers usually apply a combination of the branch and bound method and the cutting-plane method, which can solve the MILP problem quite fast for the small-scale problem. However, our model involves the payments of mobile or IoT devices, and the scale of the PCNs may be enormous, leading to an extremely large MILP problem, creating a bottleneck in the solvers' computational performance. Hence we overcome this problem by proposing an approximation solution to solve the large-scale problem. 

\textbf{Large-scale approximation solution.} Firstly, we introduce a lemma that reveals the relationship between the placement plan $\boldsymbol{x}$ and the assignment plan $\boldsymbol{y}$.

\begin{lemma} \label{lenmma1}
	Given a placement plan $\boldsymbol{x}$, for each $m \in \mathbb{V}_{\rm CLI}, n \in \mathbb{V}_{\rm SNC}$, the optimal assignment plan $\boldsymbol{y}$ can be expressed as

	\begin{small}
	\begin{gather} \label{eq_ymn}
	\hspace{-2mm}
	y_{mn} = 
	\left\{
	\begin{aligned}
	& 1, \,\, if \,\, n = \mathop{\arg\min}_{n' \in \mathbb{V}_{\rm SNC} : x_{n'} = 1} ( \omega\sum_{l \in \mathbb{V}_{\rm SNC} : x_l = 1} \delta_{n'l} + \zeta_{mn'} ), \\
	& 0, \,\, otherwise.
	\end{aligned}
	\right.
	\end{gather}
	\end{small}

\end{lemma}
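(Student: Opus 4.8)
The plan is to fix the placement vector $\boldsymbol{x}$ and regard the balance cost $\mathcal{C}_B(\boldsymbol{x},\boldsymbol{y})$ purely as a function of the assignment variables $\boldsymbol{y}$; I will show it is additively separable over the clients $m\in\mathbb{V}_{\rm CLI}$, and that both the objective and the feasible region decouple across $m$. Once this is established, minimizing $\mathcal{C}_B$ amounts to solving, for each client independently, a trivial one-dimensional minimization whose optimizer is exactly the index displayed in~(\ref{eq_ymn}).

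First I would discard the part of $\mathcal{C}_S(\boldsymbol{x},\boldsymbol{y})$ that is constant once $\boldsymbol{x}$ is frozen, namely $\sum_{n,l}x_nx_l\epsilon_{nl}$, since it cannot affect the optimal $\boldsymbol{y}$. Next I would use the feasibility constraint $y_{mn}\le x_n$ together with $y_{mn},x_n\in\{0,1\}$, which forces $x_n y_{mn}=y_{mn}$ for all $m,n$; substituting this into the remaining cross term of $\mathcal{C}_S$ and exchanging the order of summation rewrites $\omega\sum_{n,l}x_nx_l\delta_{nl}\sum_m y_{mn}$ as $\omega\sum_m\sum_n y_{mn}\bigl(\sum_{l\in\mathbb{V}_{\rm SNC}:\,x_l=1}\delta_{nl}\bigr)$. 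Combining this with $\mathcal{C}_M(\boldsymbol{y})=\sum_m\sum_n \zeta_{mn}y_{mn}$ yields
\begin{small}
\begin{equation*}
\mathcal{C}_B(\boldsymbol{x},\boldsymbol{y}) \;=\; \sum_{m\in\mathbb{V}_{\rm CLI}}\;\sum_{n\in\mathbb{V}_{\rm SNC}} y_{mn}\Bigl(\zeta_{mn}+\omega\!\!\sum_{l\in\mathbb{V}_{\rm SNC}:\,x_l=1}\!\!\delta_{nl}\Bigr)\;+\;\mathrm{const}.
\end{equation*}
\end{small}
The bracketed coefficient depends only on $m$, $n$ and the fixed $\boldsymbol{x}$, and the constraints $\sum_n y_{mn}=1$, $y_{mn}\le x_n$, $y_{mn}\in\{0,1\}$ each involve a single client $m$; hence the problem splits into $|\mathbb{V}_{\rm CLI}|$ independent subproblems, and in each one the optimum puts all the mass on an $n$ with $x_n=1$ minimizing $\zeta_{mn}+\omega\sum_{l:x_l=1}\delta_{nl}$, which is~(\ref{eq_ymn}).

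I do not expect a genuine obstacle here; the only points needing care are bookkeeping. I would (i) verify carefully that the identity $x_n y_{mn}=y_{mn}$ is legitimate, so that no coefficient of $y_{mn}$ secretly couples to another client's choice; (ii) note that $\boldsymbol{x}$ must be a feasible placement, i.e.\ $\{n:x_n=1\}\neq\varnothing$, otherwise neither the $\arg\min$ in~(\ref{eq_ymn}) nor any assignment exists; and (iii) remark that when the $\arg\min$ is attained at several nodes, any one of them gives an optimal $\boldsymbol{y}$, so~(\ref{eq_ymn}) is to be read with a fixed tie-breaking rule. These I would state as short remarks rather than expand.
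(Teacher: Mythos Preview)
Your proposal is correct. It differs from the paper's own proof in route, though not in underlying content. The paper argues by contradiction via an exchange argument: it supposes an optimal assignment sends some client $m^o$ to a hub $n^o$ while another open hub $n^h$ has strictly smaller value of $\zeta_{m\cdot}+\omega\sum_{l:x_l=1}\delta_{\cdot l}$, computes the change in $\mathcal{C}_B$ when $m^o$ is reassigned to $n^h$, and observes this change is strictly negative, contradicting optimality. You instead give a direct proof: you algebraically rewrite $\mathcal{C}_B(\boldsymbol{x},\cdot)$ (using $x_n y_{mn}=y_{mn}$ on the feasible set and dropping the $\boldsymbol{y}$-independent term $\sum_{n,l}x_nx_l\epsilon_{nl}$) as a constant plus $\sum_m\sum_n y_{mn}\,c_{mn}(\boldsymbol{x})$, note that both objective and constraints separate over $m$, and read off the minimizer. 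Your route makes the separability explicit and immediately handles existence and ties, at the cost of a little more algebra; the paper's exchange argument is shorter but leaves the decoupling implicit. Either is perfectly acceptable here.
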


\begin{proof}
	Assuming that there is an optimal assignment plan $\boldsymbol{y^o}$, in which the client $m^o$ is assigned to the smooth node $n^o$. Then there is another node $n^h \neq n^o$ and $n^h = 1$, let\\
	\begin{small}
	\begin{gather}
	\omega\sum_{l \in \mathbb{V}_{\rm SNC} : x_l = 1} \delta_{n^h l} + \zeta_{mn^h} < 
	\omega\sum_{l \in \mathbb{V}_{\rm SNC} : x_l = 1} \delta_{n^o l} + \zeta_{mn^o},
	\end{gather}
	\end{small}
	\noindent which shows that if the client $m^o$ is reassigned to the smooth node $n^h$, the management cost reduces $\zeta_{mn^o} - \zeta_{mn^h}$, and the synchronization cost reduces $\sum_{l \in \mathbb{V}_{\rm SNC} : x_l = 1} \delta_{n^o l} - \sum_{l \in \mathbb{V}_{\rm SNC} : x_l = 1} \delta_{n^h l}$. Thus the value of objective function $\mathcal{C}_B$ reduces $\omega\sum_{l \in \mathbb{V}_{\rm SNC} : x_l = 1} \delta_{n^o l} + \zeta_{mn^o} - \omega\sum_{l \in \mathbb{V}_{\rm SNC} : x_l = 1} \delta_{n^o l} - \zeta_{mn^o} > 0$. However, this contradicts our assumption that $\boldsymbol{y^o}$ is an optimal assignment plan.
\end{proof}

Lemma \ref{lenmma1} indicates that it is easy to find the assignment plan $\boldsymbol{y}$ for a given placement plan $\boldsymbol{x}$, thus we concentrate on optimizing the placement plan. Let $X_n$ represent the placement of a smooth node $n$ (i.e., $x_n = 1$), and the set of all possible placements of the smooth node is shown as

\begin{small}
\begin{gather}
\mathcal{S} = (X_n : n \in \mathbb{V}_{\rm SNC}),
\end{gather}
\end{small}

\noindent which means if and only if $X_n \in \mathcal{X}$, a subset $\mathcal{X} \subseteq \mathcal{S}$ shows a placement plan $\boldsymbol{x}$ that $x_n = 1$. Let $\boldsymbol{x}_\mathcal{X}$ denote the binary representation of $\mathcal{X}$, thus the balance cost objective function $\mathcal{C}_B$ can be denoted as a set of function $f: 2^{\mathcal{S}} \rightarrow \mathbb{R}$ 

\begin{small}
\begin{gather}
f(\mathcal{X}) = \mathcal{C}_B(\boldsymbol{x}_\mathcal{X}, y(\boldsymbol{x}_\mathcal{X})),
\end{gather}
\end{small}

\noindent where $y(\boldsymbol{x}_\mathcal{X})$ indicates the optimal assignment plan given the smooth node placement plan $\boldsymbol{x}_\mathcal{X}$ based on equation (\ref{eq_ymn}).

Secondly, we consider a well-researched class of set functions known as supermodular\cite{Ilev01}.
\vspace*{-0.3\baselineskip}
\begin{Def} \label{theprem1}
	Given a finite set $\mathcal{S}$, a set function $f: 2^{\mathcal{S}} \rightarrow \mathbb{R}$ is called supermodular if for all subsets $\mathcal{A}, \mathcal{B} \subseteq \mathcal{S}$ with $\mathcal{A} \subseteq \mathcal{B}$ and every element $i \in \mathcal{S} \setminus \mathcal{B}$ it holds that 
	\begin{small}
	\begin{gather}
	f(\mathcal{A} \cup \left\lbrace i \right\rbrace) - f(\mathcal{A}) \le f(\mathcal{B} \cup \left\lbrace i \right\rbrace) - f(\mathcal{B}),
	\end{gather}
	\end{small}
\end{Def}

\noindent which states that when an element $i$ is added to a set, the marginal value rises with the expansion of the respective set. 

\begin{lemma} \label{lemma2}
	The set function $f(\mathcal{X})$ is supermodular for the case of uniform costs $\delta_{nl} = \delta_{n'l'} = \delta$, $\forall n, l, n', l' \in \mathbb{V}_{\rm SNC}$.
\end{lemma}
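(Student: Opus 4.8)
The plan is to collapse $f$ to an explicit set function under the uniform‑cost hypothesis, decompose it into a few elementary pieces, verify each piece is supermodular in the sense of Definition~\ref{theprem1}, and conclude because a non‑negative linear combination of supermodular functions is supermodular.

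First I would simplify the optimal assignment. When $\delta_{nl}=\delta$ for all $n,l\in\mathbb{V}_{\rm SNC}$, the quantity $\omega\sum_{l\in\mathbb{V}_{\rm SNC}:\,x_l=1}\delta_{n'l}=\omega\delta|\mathcal{X}|$ appearing in the rule~(\ref{eq_ymn}) of Lemma~\ref{lenmma1} is the same for every candidate $n'$, so for a given placement $\mathcal{X}$ the optimal $\boldsymbol{y}$ simply assigns each client $m$ to a placed node minimizing $\zeta_{mn}$ alone (fixing a tie‑breaking order). Substituting this into $\mathcal{C}_M$ and $\mathcal{C}_S$ of~(\ref{cost1})--(\ref{cost2}), and using $\sum_{n\in\mathcal{X}}\sum_{m}y_{mn}=|\mathbb{V}_{\rm CLI}|$ since each client is assigned exactly once, I would obtain the closed form $f(\mathcal{X})=\sum_{m\in\mathbb{V}_{\rm CLI}}\min_{n\in\mathcal{X}}\zeta_{mn}+\omega\delta|\mathbb{V}_{\rm CLI}|\,|\mathcal{X}|+\omega\sum_{n\in\mathcal{X}}\sum_{l\in\mathcal{X}}\epsilon_{nl}$.

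Next I would check supermodularity term by term. For a single client, $g_m(\mathcal{X})=\min_{n\in\mathcal{X}}\zeta_{mn}$ satisfies $g_m(\mathcal{A}\cup\{i\})-g_m(\mathcal{A})=\min\{0,\,\zeta_{mi}-g_m(\mathcal{A})\}$; since $\mathcal{A}\subseteq\mathcal{B}$ forces $g_m(\mathcal{A})\ge g_m(\mathcal{B})$, this marginal is $\le\min\{0,\,\zeta_{mi}-g_m(\mathcal{B})\}=g_m(\mathcal{B}\cup\{i\})-g_m(\mathcal{B})$, which is exactly the supermodular inequality, and summing over $m$ preserves it. The middle term is a non‑negative multiple of $|\mathcal{X}|$, hence modular and a fortiori supermodular. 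For the quadratic term $h(\mathcal{X})=\sum_{n,l\in\mathcal{X}}\epsilon_{nl}$, a direct computation gives $h(\mathcal{X}\cup\{i\})-h(\mathcal{X})=\epsilon_{ii}+\sum_{n\in\mathcal{X}}(\epsilon_{ni}+\epsilon_{in})$, which is non‑decreasing in $\mathcal{X}$ provided the synchronization costs are non‑negative, so $h$ is supermodular. Since $\omega\ge0$ and $\delta\ge0$, $f$ is a non‑negative combination of supermodular functions and therefore supermodular.

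The main obstacle I expect is not depth but bookkeeping: making the tie‑breaking in Lemma~\ref{lenmma1} explicit so that $\boldsymbol{y}$, and hence $f$, is well defined; verifying that the full double sum in~(\ref{cost2}), including its diagonal $n=l$ contributions, folds correctly into the closed form above; and stating the standing non‑negativity assumptions on $\zeta$, $\delta$, and especially $\epsilon$, since supermodularity of the $\epsilon$‑term genuinely requires $\epsilon_{nl}\ge0$. I would also restrict the claim to placements $\mathcal{X}$ for which every client admits an assignment (i.e. $\mathcal{X}\neq\varnothing$), on which $f$ is finite and the statement is intended to hold.
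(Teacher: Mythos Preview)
Your argument is correct. The paper itself does not prove Lemma~\ref{lemma2}; it simply states that the result ``has been proved in~\cite{QinPIT18}'' and moves on. Your proposal therefore supplies strictly more than the paper does: a self‑contained derivation that (i) uses the uniform‑$\delta$ hypothesis to reduce the optimal assignment of Lemma~\ref{lenmma1} to a nearest‑hub rule, (ii) obtains the explicit decomposition $f(\mathcal{X})=\sum_m\min_{n\in\mathcal{X}}\zeta_{mn}+\omega\delta\,|\mathbb{V}_{\rm CLI}|\,|\mathcal{X}|+\omega\sum_{n,l\in\mathcal{X}}\epsilon_{nl}$, and (iii) checks supermodularity of each summand directly. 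Each step is sound, and the caveats you flag (tie‑breaking, the diagonal $n=l$ terms, non‑negativity of $\epsilon_{nl}$, and the restriction to $\mathcal{X}\neq\varnothing$) are exactly the points a careful write‑up should make explicit.
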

The lemma \ref{lemma2} has been proved in \cite{QinPIT18}. Based on that, the placement problem can be molded as minimizing a supermodular function $f$.

\begin{algorithm}[!htbp]
	\DontPrintSemicolon
	\normalem %去除掉下划线
	\caption{Placement Approximation Algorithm} \label{alg:random}
	\footnotesize{
		\KwInput{Two initially solutions $X_0^s$, $Y_0^s$, element $u_i$}
		\KwOutput{Final solution $X_{z}^s$ (or equivalently $Y_{z}^s$)}
		\For{$i = 1$ \textbf{to} $z$}
		{\label{A1l1}
			\tcp{Maintain the two solutions until they coincide}
			$a_i \leftarrow f(X_{i-1}^s \cup \left\lbrace u_i \right\rbrace) - f(X_{i-1}^s)$ \\
			$b_i \leftarrow f(Y_{i-1}^s \setminus \left\lbrace u_i \right\rbrace) - f(Y_{i-1}^s)$ \\
			$a_i' \leftarrow \max\left\lbrace a_i,0 \right\rbrace$, $b_i' \leftarrow \max\left\lbrace b_i,0 \right\rbrace$ \\
			\If{$a_i'/(a_i'+b_i')^{\star}$} 
			{ \label{A1l5}
				$X_i^s \leftarrow X_{i-1}^s \cup \left\lbrace u_i \right\rbrace$, $Y_{i}^s \leftarrow Y_{i-1}^s$ \\
			}	
			\Else
			{
				$X_i^s \leftarrow X_{i-1}^s$, $Y_{i}^s \leftarrow Y_{i-1}^s \setminus \left\lbrace u_i \right\rbrace$ \\
			}
		}
		\uline{\textbf{return}} $X_{z}^s$ (or equivalently $Y_{z}^s$) \label{A1l9}\\
		\vspace{0.5mm}
		$^{\star}$ If $a_i'=b_i'=0$, then $a_i'/(a_i'+b_i')=1$ \label{A1l10}
	}
\end{algorithm}

Thirdly, solving this kind of problem is equivalent to addressing their submodular function maximization version. Let $f^{ub}$ denote an upper limit of the highest possible value of $f(\mathcal{X})$, the submodular function is 
$\widehat{f}(\mathcal{X}) = f^{ub} - f(\mathcal{X})$.

There are various approximation algorithms (e.g., \cite{FeldmanNS11, BuchbinderFNS15}) to maximize $\widehat{f}(\mathcal{X})$, and an approximation bound $\psi$ indicates the ratio of the value of the approximate solution over the optimal solution value is always at least $\psi$. The algorithm in \cite{BuchbinderFNS15} provides the best approximation bound, which  $\psi = \frac{1}{2}$. As is outlined in Alg. \ref{alg:random}, it yields in $z = |\mathbb{V}_{\rm SNC}|$ iterations, and $u_i(1 \le i \le z)$ is an arbitrary element of set $\mathcal{S}$. Two solutions $X^s$ and $Y^s$ initially set as $X_0^s \leftarrow \varnothing$ and $Y_0^s \leftarrow \mathcal{S}$. \textbf{Lines \ref{A1l1}-\ref{A1l9}} show as follows. At the $i^{th}$ iteration, the algorithm adds $u_i$ to $X_{i-1}^s$ or removes $u_i$ from $Y_{i-1}^s$ randomly and greedily based on the marginal gain of each of the two options. Thus the algorithm generates two random solutions $X_i^s$ and $Y_i^s$. After $z$ iterations, both solutions coincide (i.e., $X_z^s = Y_z^s$), and it is returned in line \ref{A1l9}. Line \ref{A1l10} handles a special case in line \ref{A1l5} where $a_i'=b_i'=0$. Lastly, based on the above, we can get the approximate solution of the large-scale network instances.

%\vspace{-0.5cm}
\subsection{Rate-Based Routing Protocol Design} \label{protocol}

\textbf{The formal constraints.} For a path $p$, let $r_p$ represent the payment rate sent on $p$ from the start to the end. We assume that TUs are sent through a payment channel of capacity $c_{a,b}$ from the smooth node $a$ to another smooth node $b$ at a rate $r_{a,b}$. Once payment is forwarded, it takes $\Delta$ time on average to receive the TUs acknowledgment from the end, thus $r_{a,b}\Delta$ funds are locked in the channel. The \textit{capacity constraint} on the channel means that the average rate cannot exceed $c_{a,b}/\Delta$. In addition, in order to ensure the channel fund balance, there is a \textit{balance constraint} that the one direction payment rate $r_{a,b}$ needs to match the rate $r_{b,a}$ in the reverse direction. Otherwise, the funds move to one end of the channel, and eventually, all converge at one end, creating a local deadlock (see \S \ref{deadlock}).

To ensure the full utilization of funds in channels, we consider a common model of \textit{utility} for making payments. The logarithm of the total rate at which payments are sent from a source represents the utility of the source \cite{KellyV05}. Therefore, we seek to maximize the total utility of whole source-destination pairs subject to the above constraints as follows:

\begin{small}
\begin{align}
\max \quad &\sum_{s,e \in \mathbb{V}} log ( \sum_{p \in \mathbb{P}_{s,e}} r_p ) \\
s.t. \quad &\sum_{p \in \mathbb{P}_{s,e}} r_p\Delta \leq d_{s,e} \quad\  \forall s,e \in \mathbb{V} \label{demand}\\ 
&\  r_{a,b} + r_{b,a} \leq \frac{c_{a,b}}{\Delta} \quad \forall (a, b) \in \mathbb{E} \label{capacity}\\
&\  \lvert r_{a,b} - r_{b,a} \rvert \leq \epsilon \ \ \ \ \, \, \forall (a, b) \in \mathbb{E} \label{balance}\\
&\  r_p \geq 0 \qquad\qquad\ \ \ \ \, \forall p \in \mathbb{P},
\end{align}
\end{small}

\noindent where $s$ denotes the start and $e$ denotes the end, $\mathbb{P}_{s,e}$ is the set of all paths from $s$ to $e$, $d_{s,e}$ is the demand from $s$ to $e$. $c_{a, b}$ denotes the capacity of the channel $(a,b)$, and $\mathbb{P}$ denotes the set of all paths. Formula (\ref{demand}) indicates the \textit{demand constraint}, ensuring the total flow of all paths is no more than the total demand. Formula (\ref{capacity}) and (\ref{balance}) are \textit{capacity} and \textit{balance constraints}, respectively. The balance constraint is harsh in the ideal case (i.e., the system parameter $\epsilon=0$), but we intend that the flow rates in both channel directions tend towards equilibrium in practice (i.e., $\epsilon$ is small enough).

\textbf{Distributed routing decisions.} Each PCH makes distributed routing (incremental) decisions for payments based on the network data of the last epoch and its clients' requests in the current epoch. Based on primal-dual decomposition techniques \cite{kelly2005stability}, we consider the optimization problem for a generic utility function $U(\sum_{p \in \mathbb{P}_{s,e}} r_p)$. Lagrangian decomposition can naturally decompose this linear programming problem into separate subproblems \cite{palomar2006tutorial}. A solution is to compute the flow rates that should be maintained on each path. We set the \textit{routing price} in both directions of each channel, and the PCHs adjust the prices to control the flow rates of TUs. Meanwhile, the routing price is used as the \textit{forwarding fee} to incentivize PCHs. 

The routing protocol is shown in Alg. \ref{alg:RP}. \textbf{(Lines \ref{A2l1}-\ref{A2l7})} There is decrypted a payment demand $D_{s,e}$, the smooth node splits it into $k$ packets of TUs $d_i$ (we limit $\textit{Min-TU}  \leq |d_i| \leq \textit{Max-TU}$ to control the number of split TUs), $|D_{s,e}| = \sum_{i=1}^{k} |d_i|$, and there are $k$ paths $\{p_i\}_{1 \leq i \leq k} \in \mathbb{P}_{s, e}$ (see \S \ref{Choices} for a discussion of choosing different paths). For brevity, we only consider a channel $(a, b)$ in path $p_i$, $(a, b) \in p_i$. Let $\lambda_{a, b}$ denote the \textit{capacity price} that indicates the total rate of arrival transactions exceeds the capacity, and let $\mu_{a, b}$ and $\mu_{b, a}$ denote the \textit{imbalance price} that represents the imbalance of rate in the two directions, respectively. These three prices are updated every $\tau$ seconds to keep the capacity and balance constraints from being violated. Let $n_a, n_b$ denote the funds required to maintain the flow rates at $a$ and $b$. The capacity price $\lambda_{a, b}$ is updated as

\begin{small}
\begin{gather}
\lambda_{a, b}(t+1) = \lambda_{a, b}(t) + \kappa(n_a(t) + n_b(t) - c_{a, b}) \label{price1},
\end{gather}
\end{small}

\noindent where $\kappa$ is a system parameter used to control the rate of price change. Any required funds excessing the capacity $c_{a, b}$ cause the capacity price $\lambda_{a, b}$ to rise, which indicates that the rates via $a, b$ need to be reduced and vice-versa. 

Let $m_a, m_b$ represent the TUs arriving at $a$ and $b$ in the last period, respectively. The imbalance price $\mu_{b, a}$ is updated as 

\begin{small}
\begin{gather}
\mu_{a, b}(t+1) = \mu_{a, b}(t) + \eta(m_a(t) - m_b(t)) \label{price2},
\end{gather}
\end{small}

\noindent where $\eta$ is a system parameter. Any funds arriving in $(a, b)$ direction more than $(b, a)$ direction cause the imbalance price $\mu_{a, b}$ to increase and $\mu_{b, a}$ to decrease, which means that the rates routing along $(a, b)$ need to be throttled and vice-versa.  

Based on observations of routing prices and node feedbacks, the smooth nodes run a multi-path routing protocol to control the rates at which payments are transferred. Probes \cite{WangXJW19} are sent periodically every $\tau$ seconds on each path to measure the above two prices. The routing price of the channel $(a, b)$ is

\begin{small}
\begin{gather}
\xi_{a, b} = 2\lambda_{a, b} + \mu_{a, b} - \mu_{b, a} \label{price3},
\end{gather}
\end{small}

\noindent the forwarding fee that $a$ needs to pay to $b$ is

\begin{small}
\begin{gather}
	\textsf{fee}_{a, b} = T_{\textsf{fee}}(2\lambda_{a, b} + \mu_{a, b} - \mu_{b, a}) \label{fee},
\end{gather}
\end{small}

\noindent where $T_{\textsf{fee}} (0<T_{\textsf{fee}}<1)$ is a systematic threshold parameter.

Thus, the total routing price of a path $p$ is

\begin{small}
\begin{gather}
\varrho_p = (1+T_{\textsf{fee}})\sum_{(a, b) \in p} \xi_{a, b} \label{price4},
\end{gather}
\end{small}

\noindent which indicates the total amount of excess and imbalance demands. Then the smooth node sends a probe on path $p$, which sums the price $\xi_{a, b}$ of each channel $(a, b)$ on $p$. Based on the routing price $\varrho_p$ from the most recently received probe, the rate $r_p$ is updated as

\begin{small}
\begin{gather}
r_p(t+1) = r_p(t) + \alpha(U'(r) - \varrho_p(t)) \label{r1},
\end{gather}
\end{small}

\noindent where $\alpha$ is a system parameter. Therefore, the sending rate on a path is adjusted reasonably according to the routing price.

\begin{algorithm}[!htbp]
	\DontPrintSemicolon
	\normalem %去除掉下划线
	\caption{Distributed Routing Decision Protocol} \label{alg:RP}
	\footnotesize{
	\KwInput{Decrypted demand $D_{s,e}$, rate $r_{p_i}$, required funds $n_a$, $n_b$, arrived TUs $m_a$, $m_b$}
	\KwOutput{Routing rates $r_{p_i}$ $(1 \leq i \leq k)$}
	Split the demand $D_{s,e}$ into $d_i$ on path $p_i$ $(1 \leq i \leq k)$. \label{A2l1}\\
	\For{$i = 1$ \textbf{to} $k$}
	{
		\For{$\forall$ payment channel $(a, b)$ on path $p_i$}
		{
			\tcp{Update the routing rates}
			$\lambda_{a, b} \leftarrow \lambda_{a, b} + \kappa(n_a + n_b - c_{a, b})$ \label{A2l3}\\
			$\mu_{a, b} \leftarrow \mu_{a, b} + \eta(m_a - m_b)$\\
			$\xi_{a, b} \leftarrow 2\lambda_{a, b} + \mu_{a, b} - \mu_{b, a}$\\
			\tcp{Forwarding fee}
			$\textsf{fee}_{a, b} \leftarrow T_{\textsf{fee}}(2\lambda_{a, b} + \mu_{a, b} - \mu_{b, a})$ \\
			\tcp{Routing price of the path $p_i$}
			$\varrho_{p_i} \leftarrow (1+T_{\textsf{fee}})\sum_{(a, b) \in p_i} \xi_{a, b}$\\
			$r_{p_i} \leftarrow r_{p_i} + \alpha(U'(r) - \varrho_{p_i})$ \label{A2l7}\\
			\tcp{Congestion control}
			\If{$r_{p_i} > r_{a, b}^{\rm process}$ \textbf{or} $F_{a, b} < |d_i|$}
			{\label{A2l8}
				$q_{a, b}^{\rm amount} \leftarrow d_i$\\
				$t_{p_i}^{\rm delay}$ $\leftarrow$ Smooth nodes monitor\\
				\If{$t_{p_i}^{\rm delay} > T$}
				{
					$d_i^* \leftarrow d_i$ \label{A2l12}\\
				}
				\If{$d_i^*$ is aborted}
				{\label{A2l13}
					$w_{p_i} \leftarrow w_{p_i} - \beta$\\
				}
				\If{$q_{a, b}^{\rm amount} < w_{p_i}$ \textbf{and} $d_i$  is transmitted}
				{
					$w_{p_i} \leftarrow w_{p_i} + \frac{\gamma}{\sum_{p_i' \in \mathbb{P}_{s,e}}w_{p_i'}}$ \label{A2l16}\\
				}
			}
		}
		\textbf{return} $r_{p_i}$	
	}
	}
\end{algorithm}

\textbf{Congestion control (Lines \ref{A2l8}-\ref{A2l16}).} We consider that this rate-based approach may cause congestion of TUs, so we use the waiting queue and window to control congestion. Whenever congestion occurs, intermediate hubs in the path queue up the TUs, representing a capacity or balance constraint violation. So the smooth nodes need to use a congestion control protocol to detect capacity and imbalance violations to control queues by adjusting the sending rates in channels.  

The congestion controller has two basic properties to achieve both efficiency and balanced rates. (i) It should try to keep the queue not empty, which indicates that the channel capacity is being used efficiently. (ii) It should keep the queues bounded, which means that the flow rate of each path can not exceed capacity or be imbalanced. There are some congestion control algorithms \cite{HaRX08} that satisfy the two properties and can be adapted for PCNs. Now we describe the protocol briefly:

If the rate $r_{p_i}$ exceeds the upper limited rate $r_{a, b}^{\rm process}$ that the channel $(a, b)$ can process, or the demand $|d_i|$ exceeds the current funds in $(a \rightarrow b)$ direction $F_{a, b}$, then the protocol goes to the congestion control part. \textit{\textbf{(i)}} Let $q_{a, b}^{\rm amount}$ denote the amount of TUs pending in queue $q_{a, b}$. The queuing delay $t_{p}^{\rm delay}$ on path $p$ is monitored by smooth nodes, and if it exceeds the pre-determined threshold $T$, the packet $d_i$ is marked as $d_i^*$. Once a TU is already marked, hubs do not process the packet and merely forward it. When the recipient sends back an acknowledgment with the marked field appropriately set, hubs forward it back to the sender. \textit{\textbf{(ii)}} Based on the observations of congestion in the network, smooth nodes control the payments rates transferred in the channels and choose a set of $k$ paths to route TUs from $s$ to $e$. \textit{\textbf{(iii)}} The window size $w_p$ represents the maximum of unfinished TUs on path $p$. The smooth nodes maintain the window size for every candidate path to a destination, which indirectly controls the flow rate of TUs on the path. The smooth nodes keep track of unserved or aborted TUs on the paths. New TUs can be transmitted on path $p$ only if the total number of TUs to be processed does not exceed $w_p$. On a path $p$ from $s$ to $e$, the window is adjusted as

\begin{small}
\begin{gather}
w_p \left( t+1\right) = w_p \left( t\right) - \beta, \label{w1}\\ 
w_p \left( t+1\right) = w_p \left( t\right) + \frac{\gamma}{\sum_{p' \in \mathbb{P}_{s,e}}w_{p'}}, \label{w2}
\end{gather}  
\end{small} 

\noindent where equation (\ref{w1}) means the marked packets fail to complete the payment within the deadline, and the senders choose to cancel the payment, and equation (\ref{w2}) means the unmarked packets are transmitted. The positive constants $\beta$ and $\gamma$ denote the factors that the window size decreases and increases.

Notice that although Spider \cite {Sivaraman2020HighTC} uses a similar multi-path payment model, the main \textit{differences} of Splicer are: (i) Splicer considers forwarding costs, though the fee model is different from that of the Lightning Network in Spider. (ii) Besides congestion control, Splicer also provides rate control as to minimize the capacity and imbalance violations in the network. (iii) Splicer's route computation is outsourced to PCHs instead of being processed by end-users. In the next section, we evaluate the performance of such an optimized Splicer vs. Spider solution.

\begin{figure*}[t]
	\vspace{-8mm}
	\centering
	\subfigure[Influence of the channel size]{
		\begin{minipage}[t]{0.23\linewidth}
			\centering
			\includegraphics[width=1.4in]{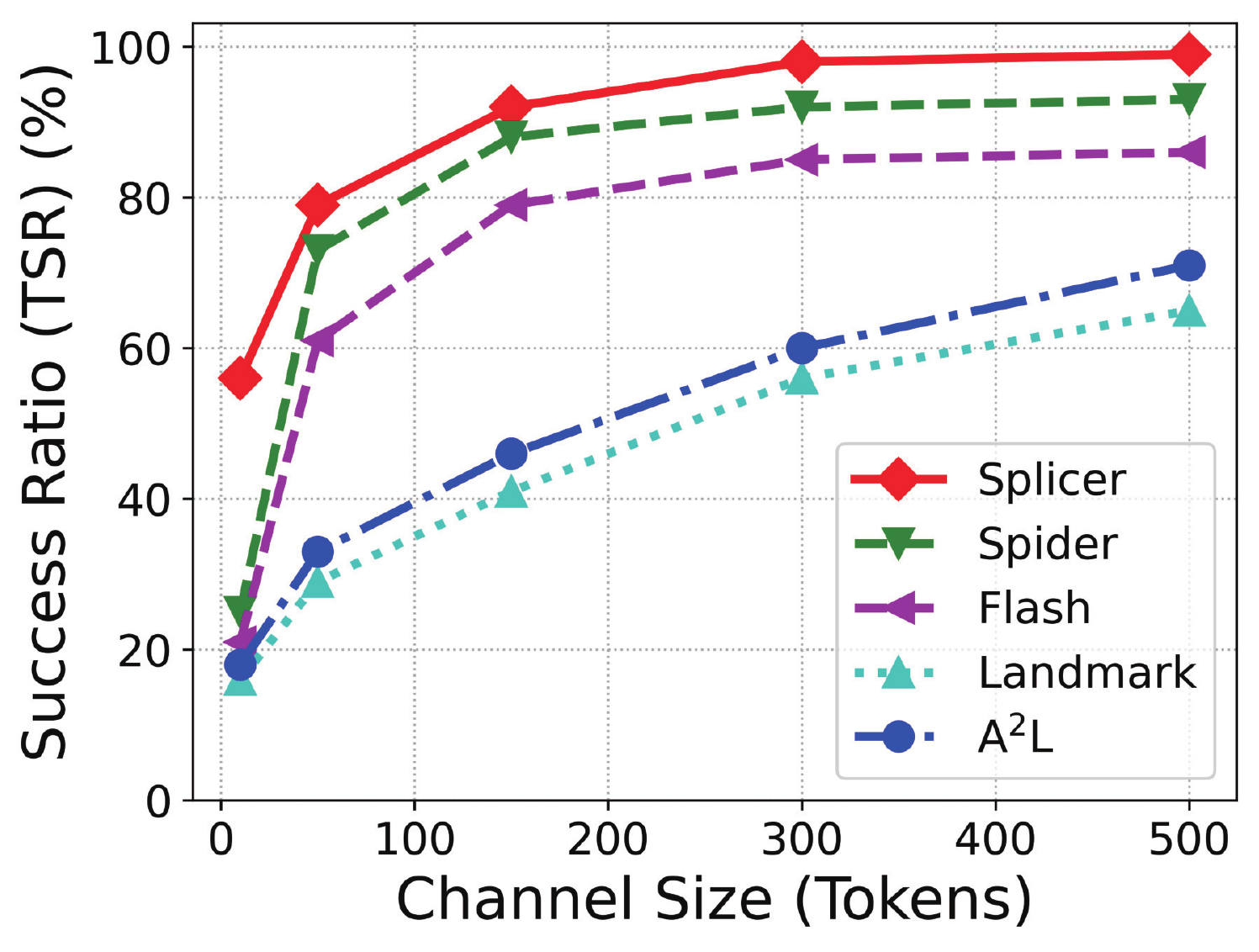}
			\label{fig_channelSize-successRatio}
			\vspace{-1cm}
		\end{minipage}%
	}%
	\hspace{1mm}
	\subfigure[Influence of the transaction size]{
		\begin{minipage}[t]{0.23\linewidth}
			\centering
			\includegraphics[width=1.4in]{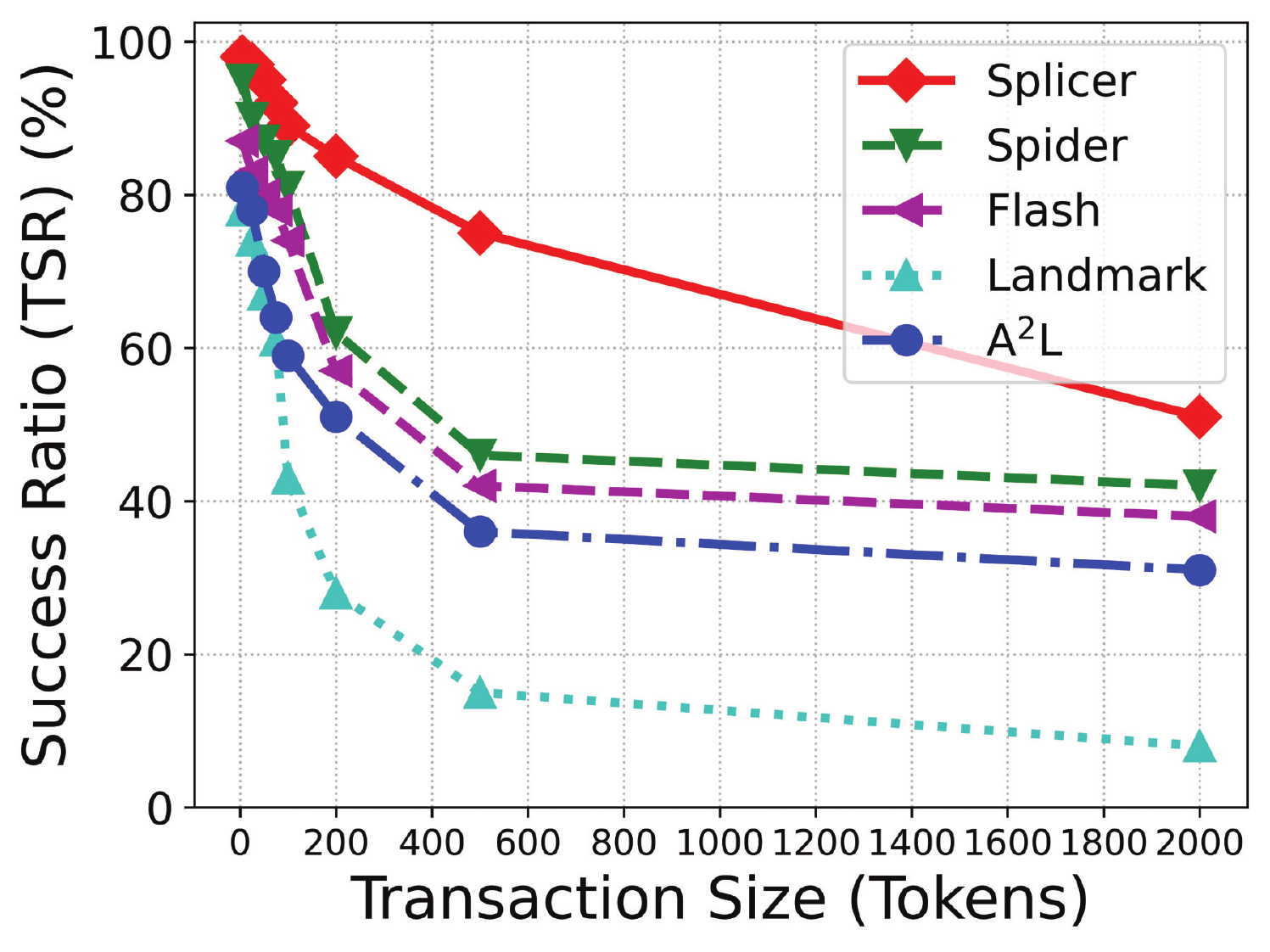}
			\label{fig_TransactionSize-successRatio}
			\vspace{-1cm}
		\end{minipage}%
	}%
	\hspace{1mm}
	\subfigure[Influence of the update time]{
		\begin{minipage}[t]{0.23\linewidth}
			\centering
			\includegraphics[width=1.4in]{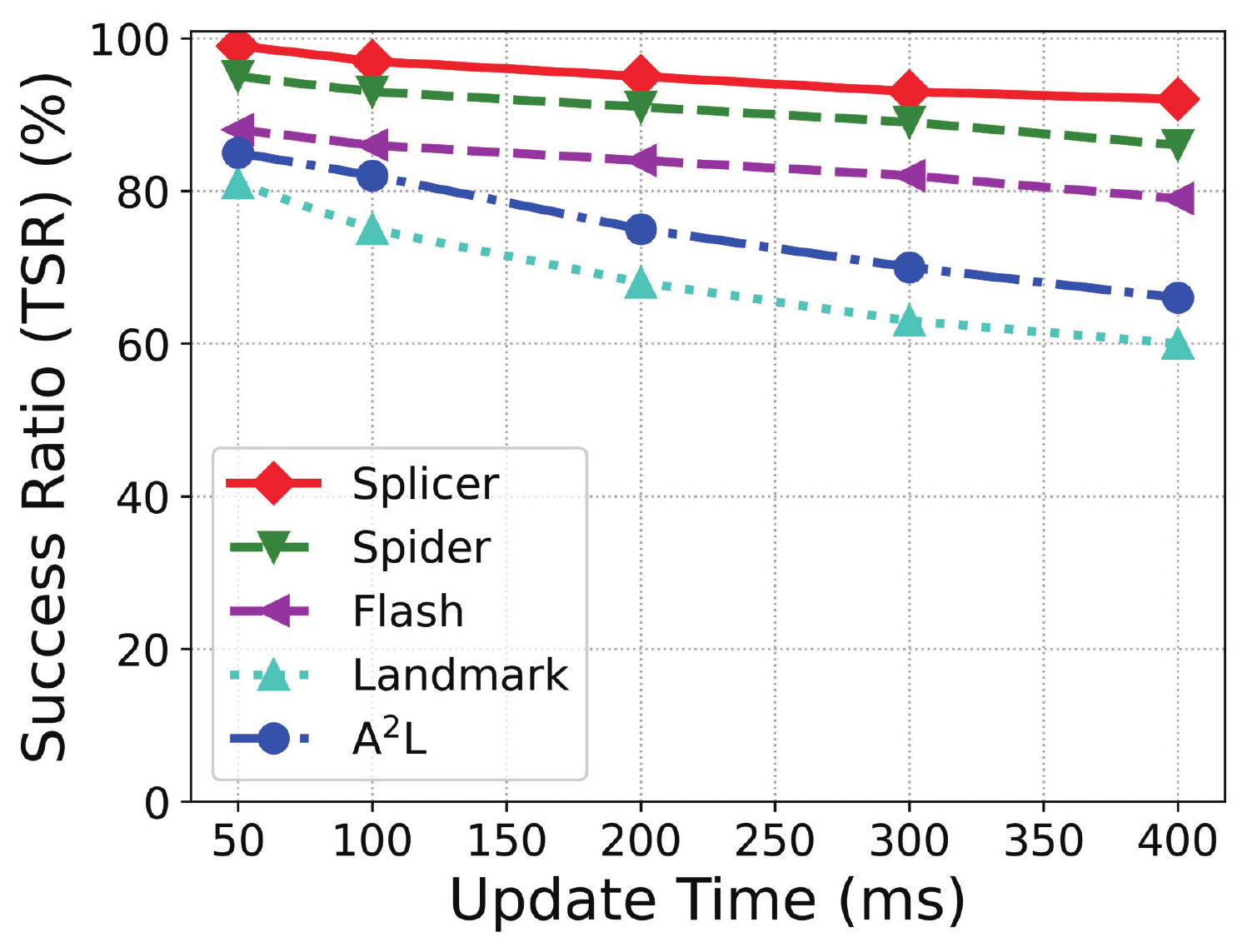}
			\label{fig_updateTime-successRatio}
			\vspace{-1cm}
		\end{minipage}
	}%
	\hspace{1mm}
	\subfigure[Normalized throughput]{
		\begin{minipage}[t]{0.23\linewidth}
			\centering
			\includegraphics[width=1.4in]{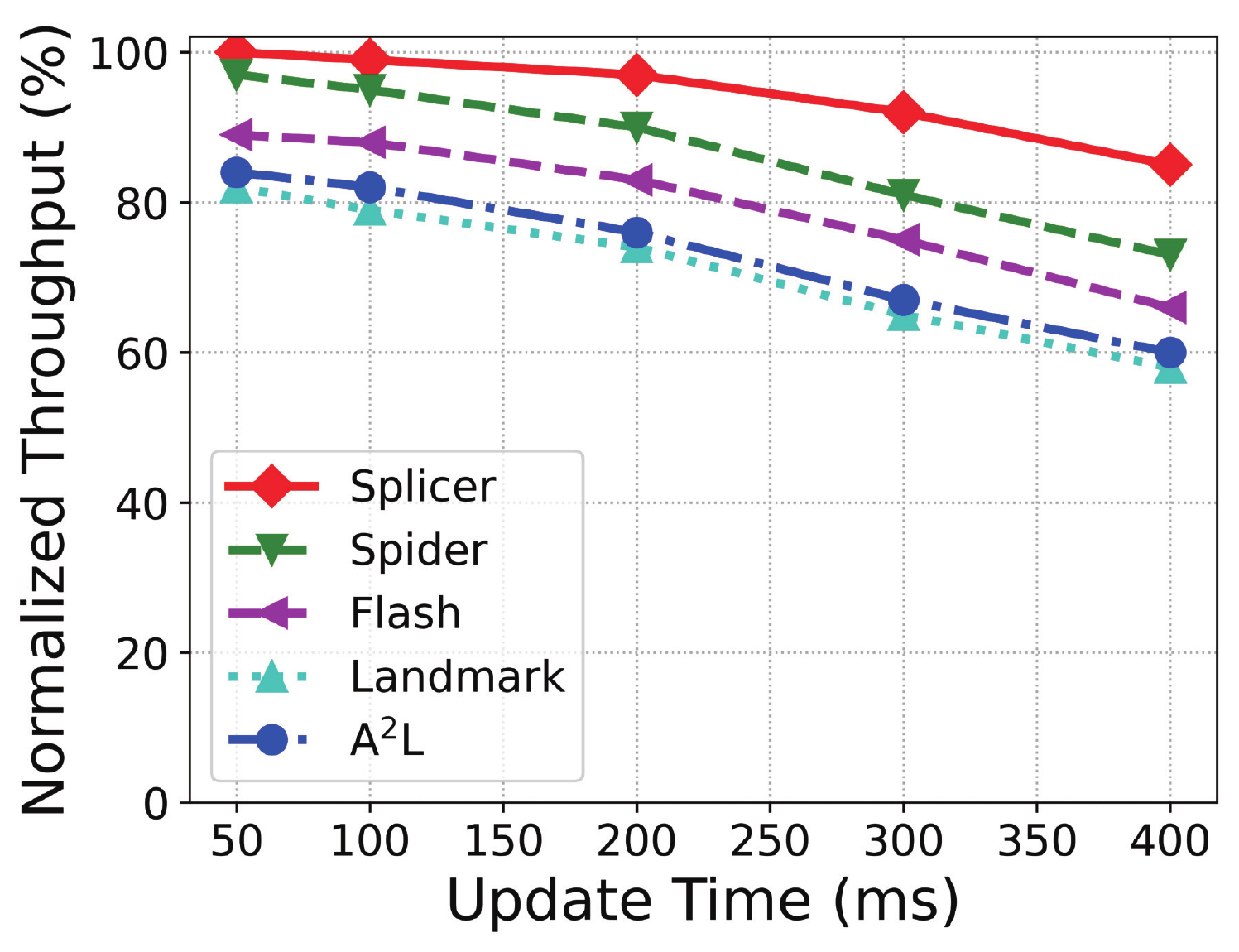}
			\label{fig_updateTime-throughput}
			\vspace{-1cm}
		\end{minipage}
	}%
	\centering
	\vspace{-0.3cm}
	\caption{The comparison between Splicer and other schemes under different metrics in small-scale networks.}
	\label{performance}
	\vspace{-0.4cm}
\end{figure*}

\begin{figure*}[t]
	\centering
	\subfigure[Influence of the channel size]{
		\begin{minipage}[t]{0.23\linewidth}
			\centering
			\includegraphics[width=1.4in]{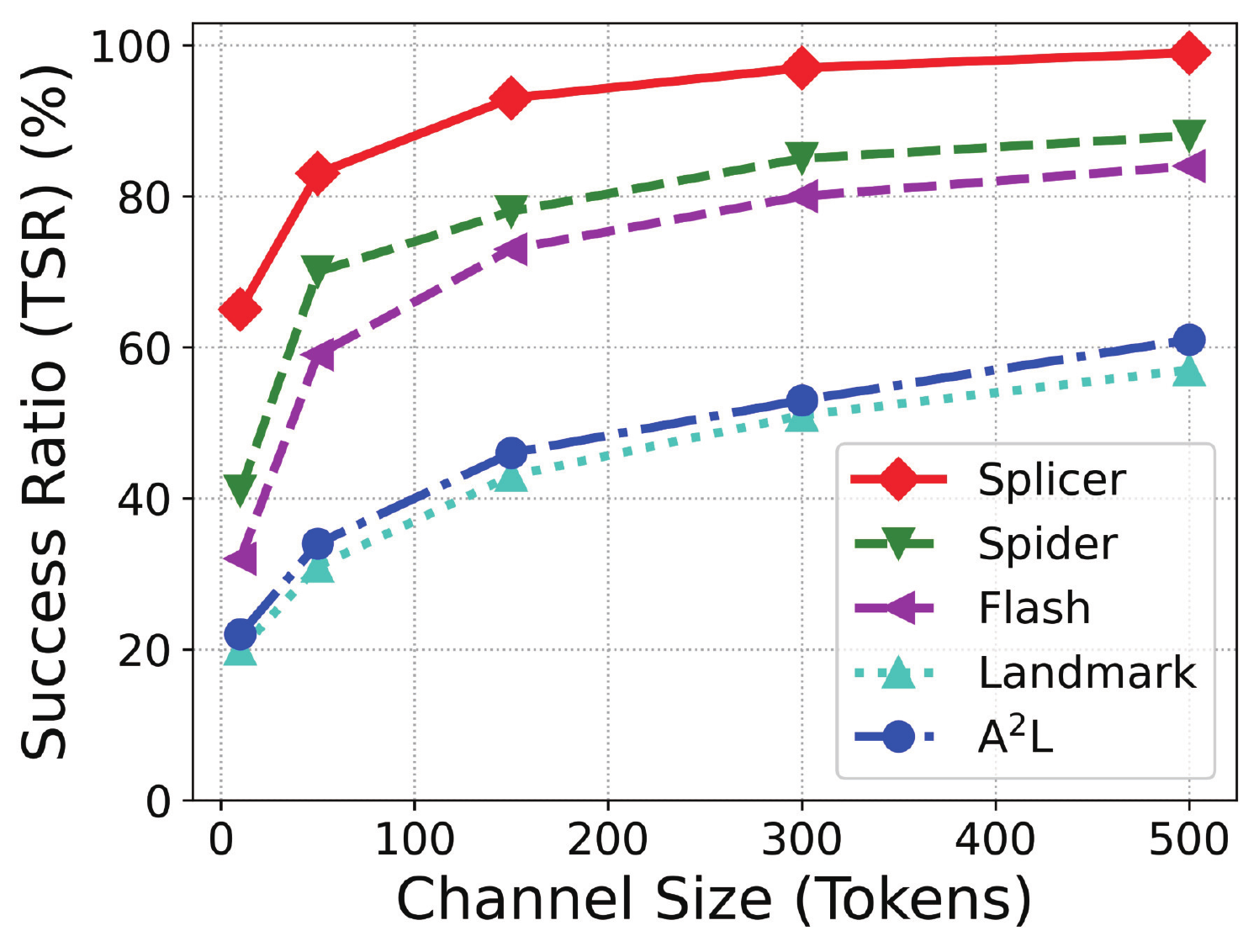}
			\label{fig_channelSize-successRatio2}
			\vspace{-1cm}
		\end{minipage}%
	}%
	\hspace{1mm}
	\subfigure[Influence of the transaction size]{
		\begin{minipage}[t]{0.23\linewidth}
			\centering
			\includegraphics[width=1.4in]{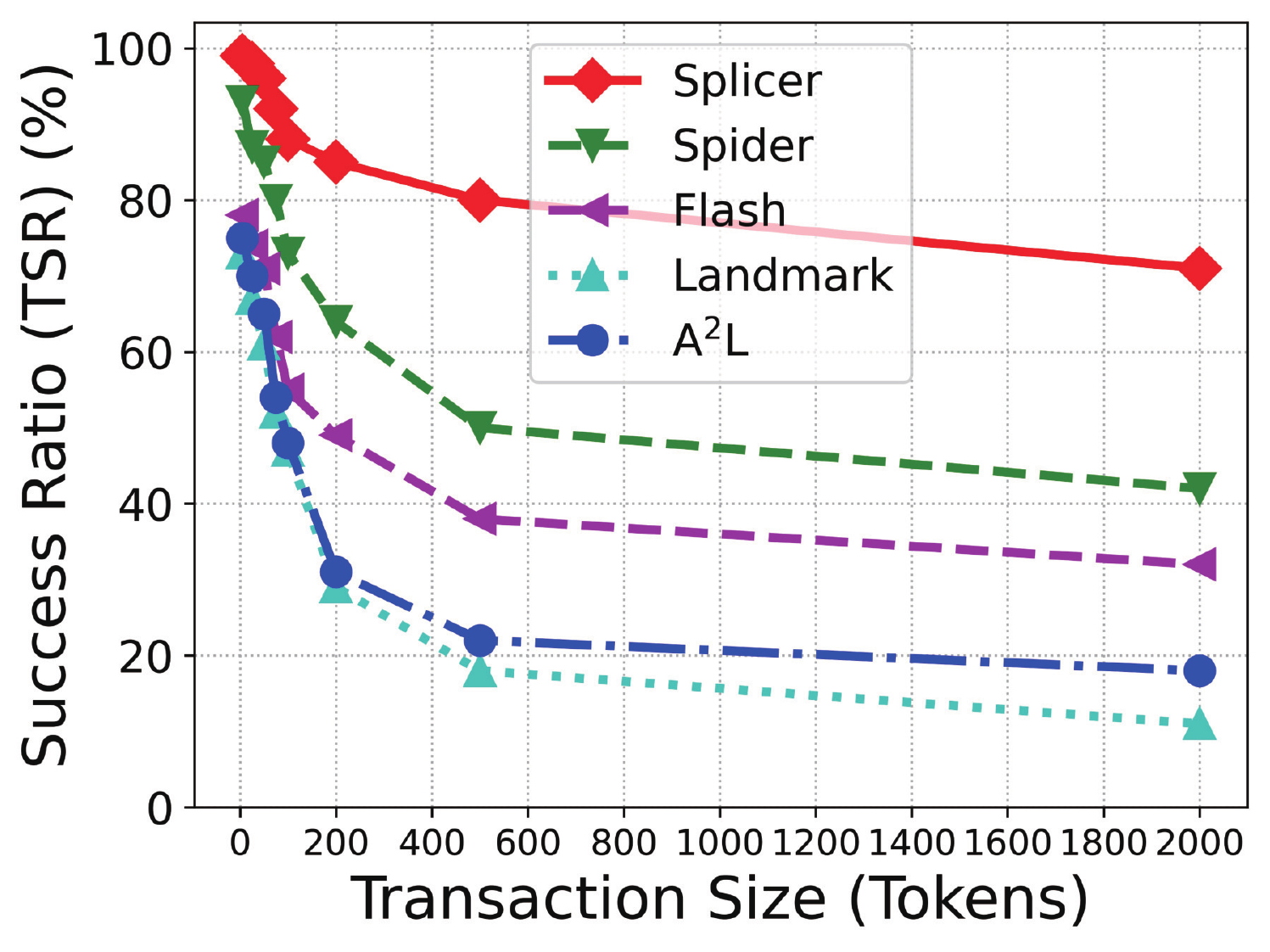}
			\label{fig_TransactionSize-successRatio2}
			\vspace{-1cm}
		\end{minipage}%
	}%
	\hspace{1mm}
	\subfigure[Influence of the update time]{
		\begin{minipage}[t]{0.23\linewidth}
			\centering
			\includegraphics[width=1.4in]{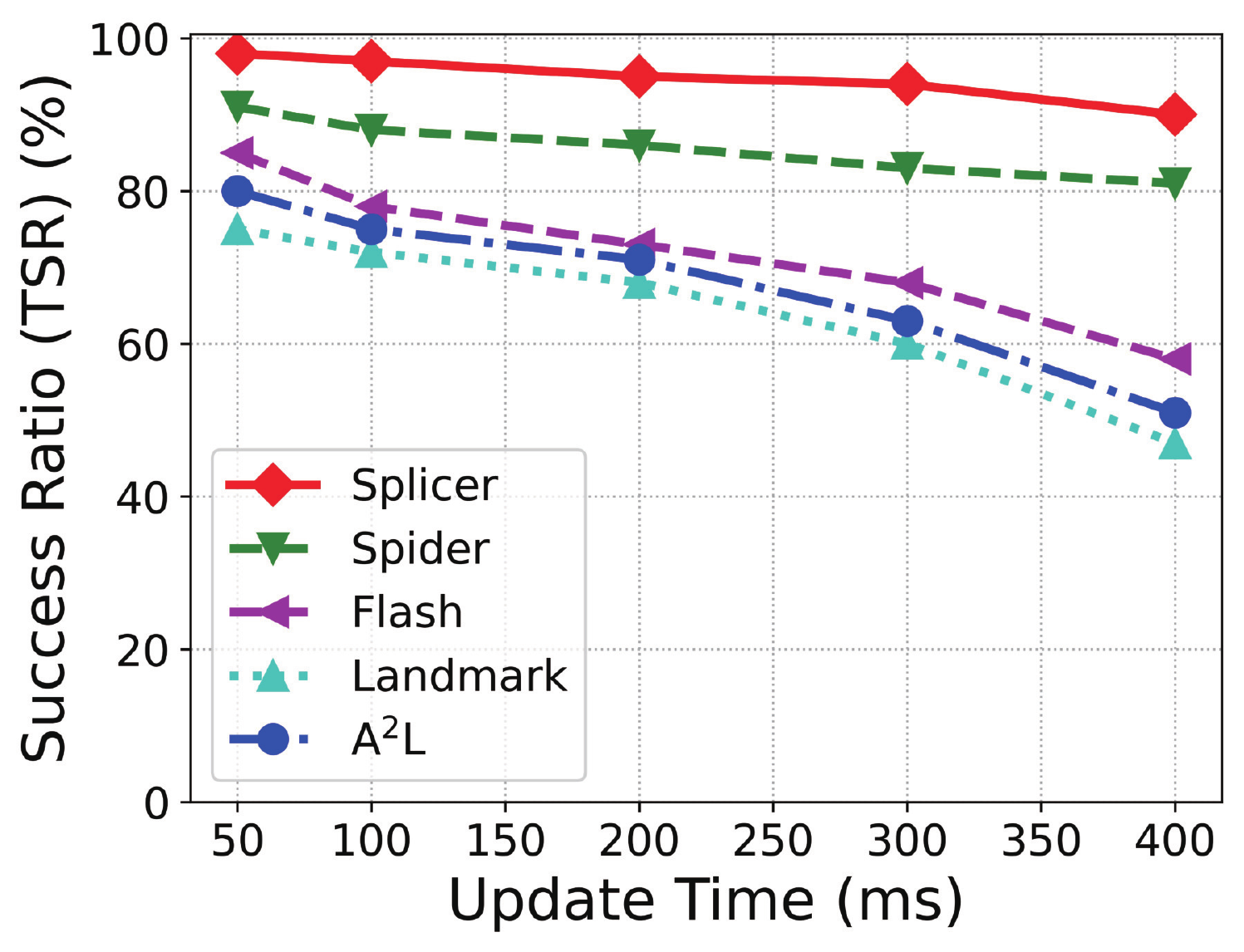}
			\label{fig_updateTime-successRatio2}
			\vspace{-1cm}
		\end{minipage}
	}%
	\hspace{1mm}
	\subfigure[Normalized throughput]{
		\begin{minipage}[t]{0.23\linewidth}
			\centering
			\includegraphics[width=1.4in]{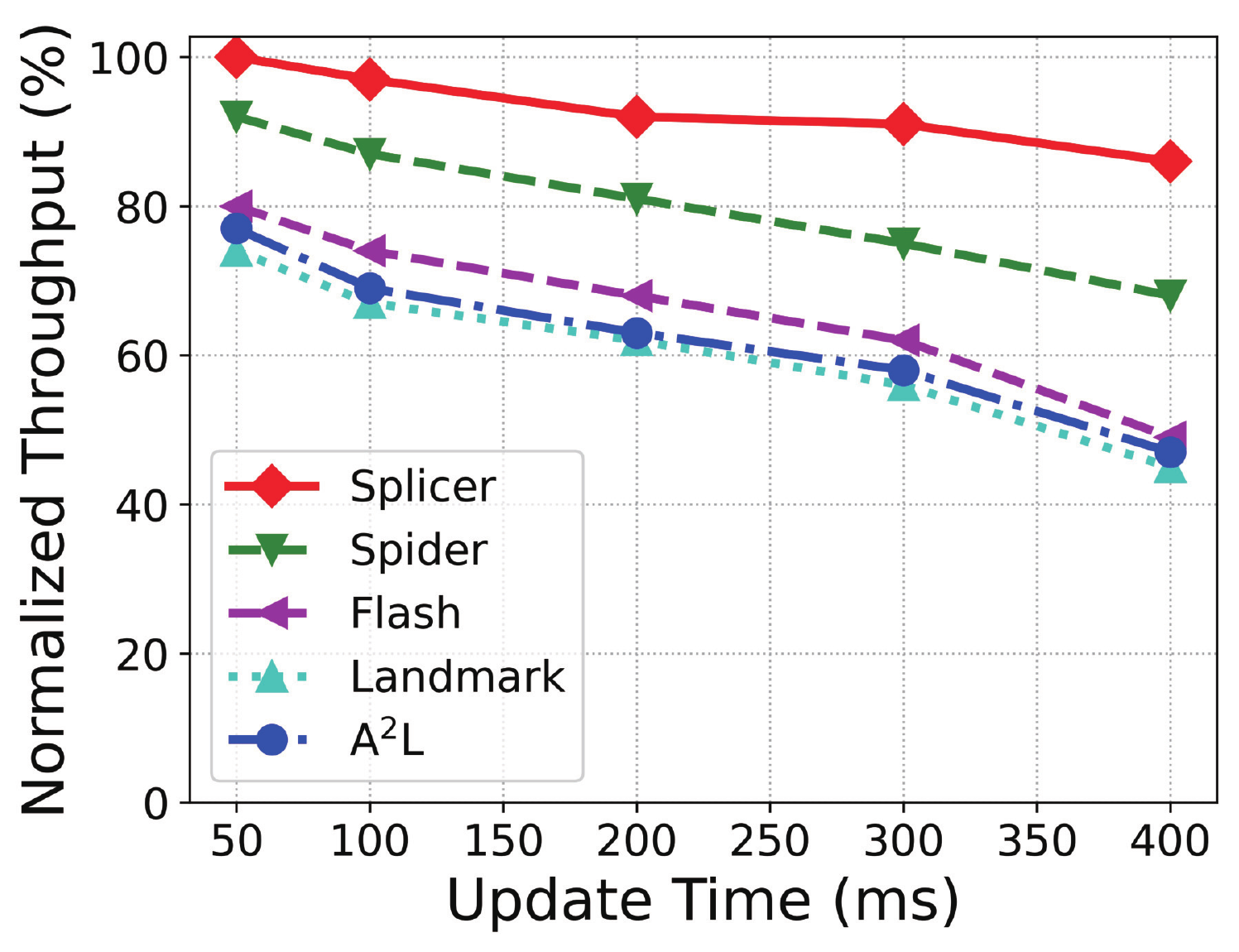}
			\label{fig_updateTime-throughput2}
			\vspace{-1cm}
		\end{minipage}
	}%
	\centering
	\vspace{-0.3cm}
	\caption{The comparison between Splicer and other schemes under different metrics in large-scale networks.}
	\label{performance2}
	\vspace{-0.5cm}
\end{figure*}

\section{Performance Evaluation}\label{Section:Evaluation}
\subsection{Experiment Setup}
Our evaluation consists of a simulation using MATLAB and full implementation of the Lightning Network Daemon (LND) testnet. We model the PCN in two scales, a small-scale network (100 nodes)  and a large-scale network (3000 nodes). Our modified LND is deployed on the machine with a six-core i7-9750H processor working at 2.6 GHz, 32 GB of RAM, 500 GB of SSD, and a 10 Gbps network interface. Referring to Spider's evaluation benchmark, the channel connections between nodes are generated by ROLL \cite{HadianNMQ16} based on the Watts-Strogatz small-world model. Following the heavy-tailed distribution of the real-world dataset on the lightning channel size \cite{TikhomirovMM20}, funds are set on each side of the channels. The directional distribution of each transaction is generated on our processed Lightning Network real-world dataset, and the transaction value is generated in the same credit card dataset \cite{Credit} adopted by Spider. Notice that we have confirmed that these transactions are guaranteed to cause some local deadlocks and contain large-value transactions that the Lightning Network cannot handle.

\textbf{Parameter settings.} The minimum, average, and median channel sizes are 10, 403, and 152 tokens. The transaction timeout is 3 seconds, the \textit{Min-TU} is 1 token, and the \textit{Max-TU} is 4 token. The number $k$ of multiple paths is 5. We set management cost $\zeta_{mn} = 0.02 \cdot hops_{mn}$, and the synchronization cost $\delta_{nl} = 0.01 \cdot hops_{nl}$, $\epsilon_{nl} = 0.05 \cdot hops_{nl}$, where $hops$ represents the number of hops in the communication path between nodes. In congestion control, we set the queue size of each channel as 8000 tokens. The factors of the window size $\beta$ and $\gamma$ are 10 and 0.1, respectively. The update time $\tau = 200$ ms. The threshold $T$ of delay in the queue is 400 ms.

\subsection{Performance of Splicer} \label{Per-Spl}
We study the performance of Splicer under different metrics compared with different schemes. As shown in Fig. \ref{performance} and Fig. \ref{performance2}, Splicer consistently outperforms other schemes in small and large network scales. \textbf{Spider} \cite{Sivaraman2020HighTC} is a multi-path source routing scheme in which each sender decides the routes. \textbf{Flash} \cite{WangXJW19} is also based on source routing, using a modified max-flow algorithm to find paths for large payments, and routing small payments randomly through precomputed paths. \textbf{Landmark} routing is adopted in many prior PCN routing schemes\cite{Flare2016, MalavoltaMKM17, RoosMKG18}. Each sender computes the shortest path to the well-connected landmark nodes, and then the landmark nodes route to the destination in $k$ distinct shortest paths. The \textbf{A$^2$L} \cite{Tairi} is the state-of-the-art PCH that focuses on providing unlinkability. The results are as follows:

\textbf{Transaction success ratio (TSR)} means the number of completed transactions over the number of generated transactions. A high value of TSR indicates the stability of the model, i.e., the ability to handle transaction deadlocks and balance the network load. Fig. \ref{fig_channelSize-successRatio} and \ref{fig_channelSize-successRatio2} show the TSR of Splicer is an average of 53.4\% higher than the other four schemes. Combining the Fig. \ref{fig_TransactionSize-successRatio} and \ref{fig_TransactionSize-successRatio2}, there is also a considerable increase (49.1\%) in the TSR as the transaction size varies. These results demonstrate that the PCHs distributed routing decision protocol can improve the TSR. We note that the improvement of Spider is more obvious under the large-scale networks. This feature benefits hubs' deployment because hubs perform many routes, have larger capital, and thus may have a larger channel size. Fig. \ref{fig_updateTime-successRatio} and \ref{fig_updateTime-successRatio2} show the TSR under the influence of update time $\tau$ in different schemes. The results show that the TSR of Splicer is stable above 90\% with the increase of update time, which is slightly higher (5\% and 10.5\%, respectively) than Spider. Because Spider also adopts a multi-path routing strategy, reducing the possibility of deadlocks, the TSR is high when the channel size is appropriate. In contrast, the TSR of A$^2$L decreased significantly, and Splicer increased by 26\% and 39\%, respectively. However, Spider handles source routing computations at the end-users, which is limited by the performance of a single machine, so the TSR is lower than Splicer, especially with large-scale networks. Because A$^2$L's complex cryptographic primitives reduce scalability, the overall average improvement in TSR for Splicer is 42\%, which proves that the network funds flow smoothly, almost without deadlock. 

\begin{figure*}[t]
	\vspace{-8mm}
	\centering
	\hspace{-5mm}
	\subfigure[Balance cost]{
		\begin{minipage}[t]{0.145\linewidth}
			\centering
			\includegraphics[width=1.15in]{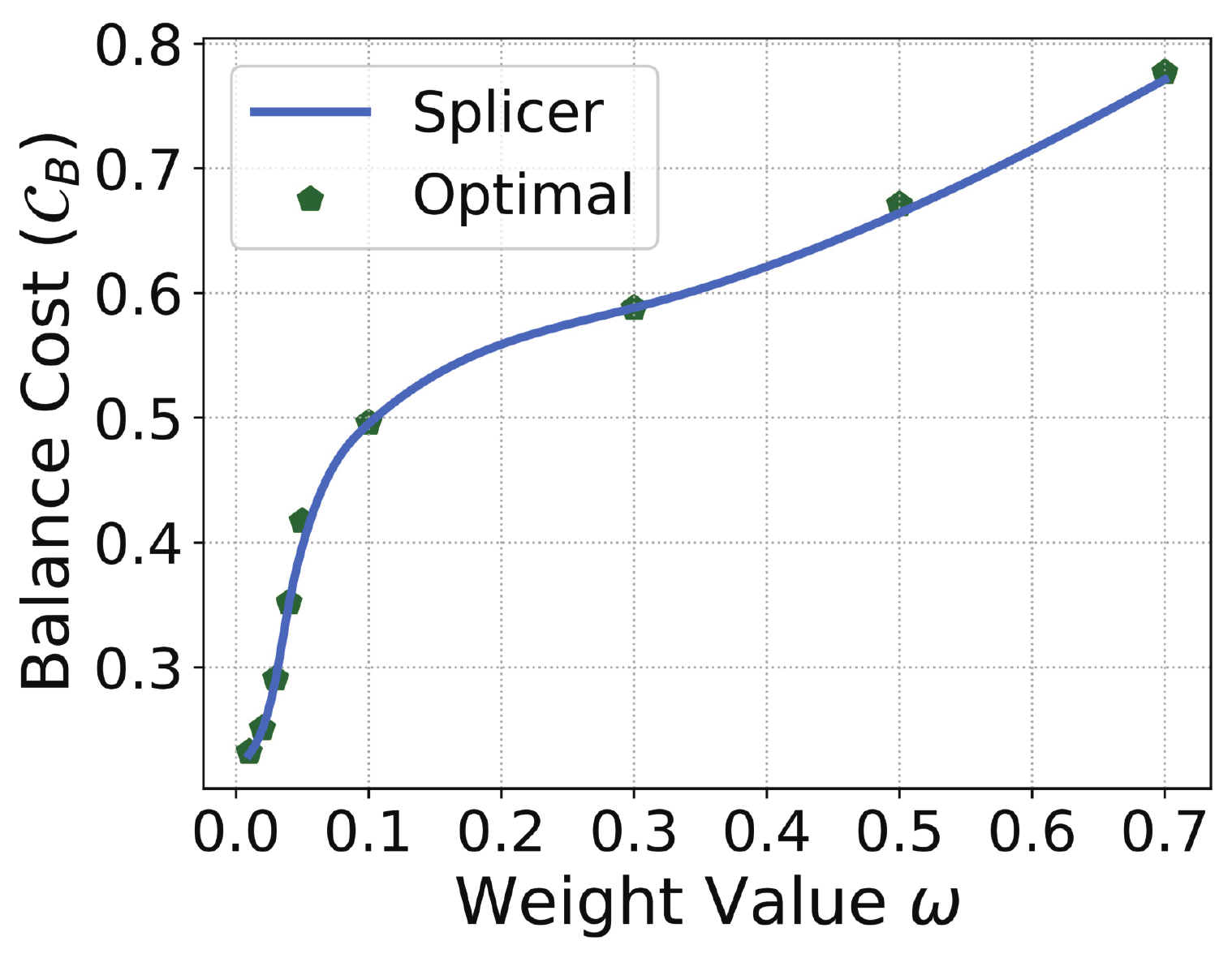}
			\label{fig_para-cost}
			\vspace{-1cm}
		\end{minipage}%
	}%
	\hspace{1mm}
	\subfigure[Trade-off in costs]{
		\begin{minipage}[t]{0.145\linewidth}
			\centering
			\includegraphics[width=1.15in]{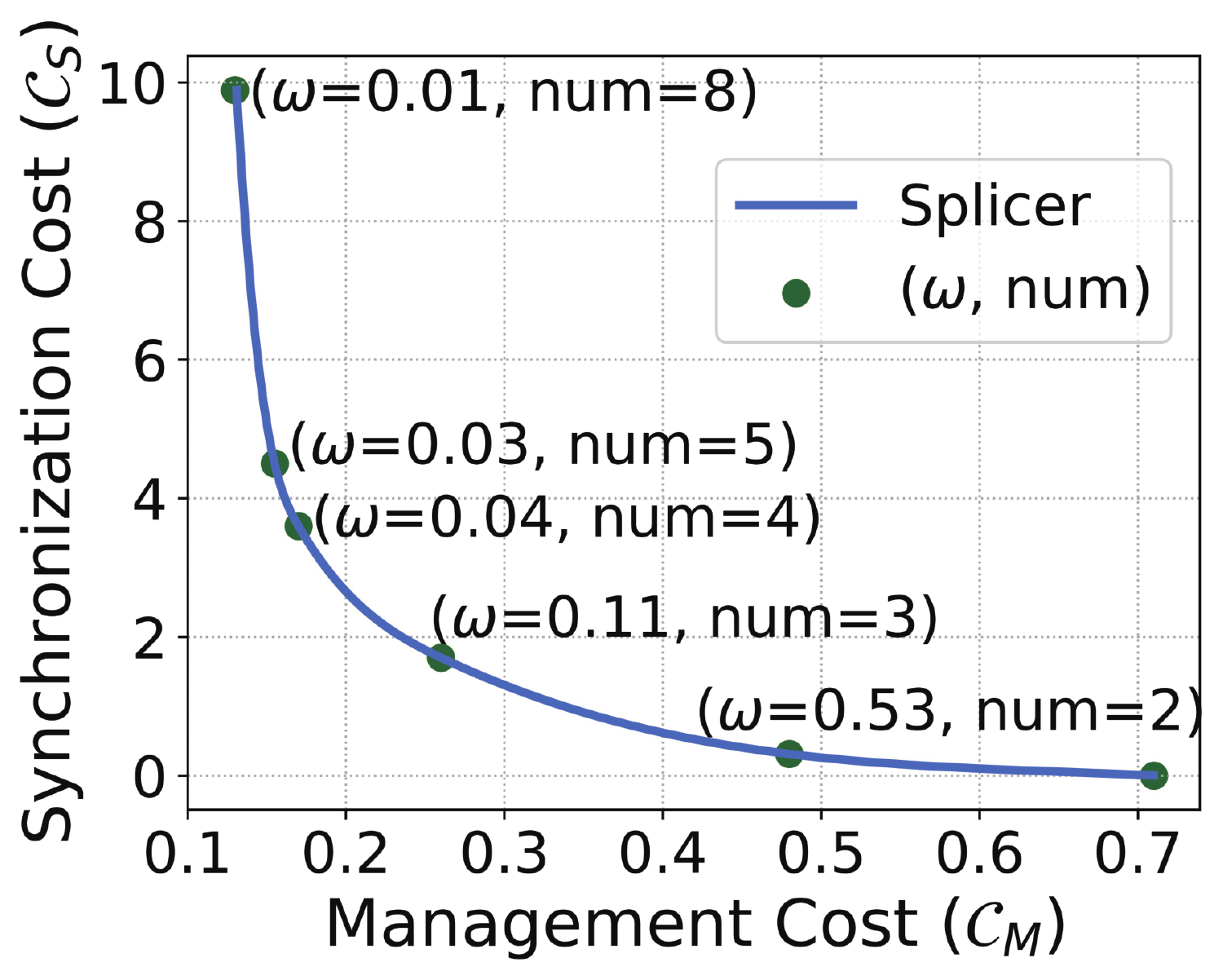}
			\label{fig_tradeoff}
			\vspace{-1cm}
		\end{minipage}%
	}%
	\hspace{1mm}
	\subfigure[Small-scale networks]{
		\begin{minipage}[t]{0.145\linewidth}
			\centering
			\includegraphics[width=1.165in]{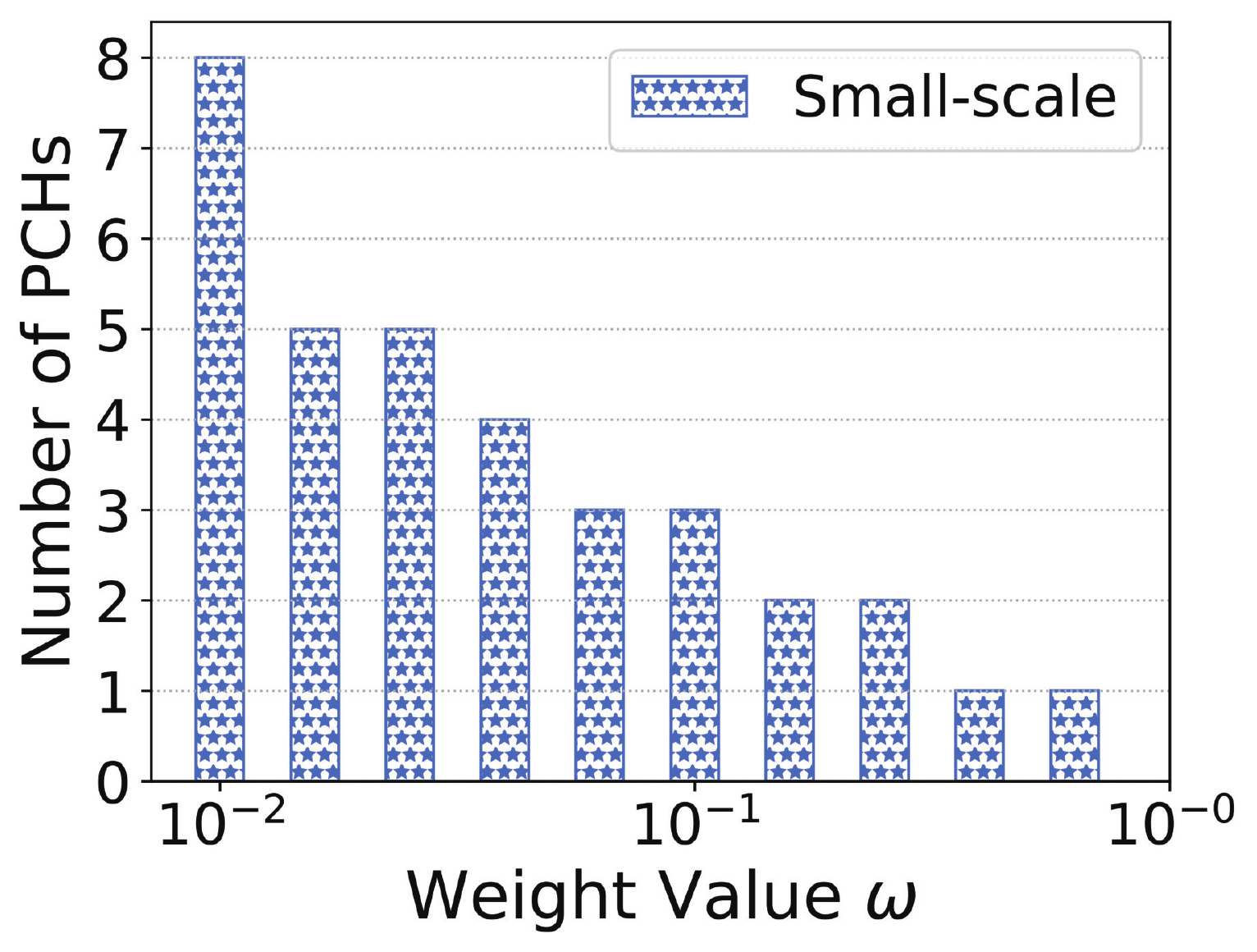}
			\label{number-small}
			\vspace{-1cm}
		\end{minipage}
	}%
	\hspace{1mm}
	\subfigure[Large-scale networks]{
		\begin{minipage}[t]{0.15\linewidth}
			\centering
			\includegraphics[width=1.24in]{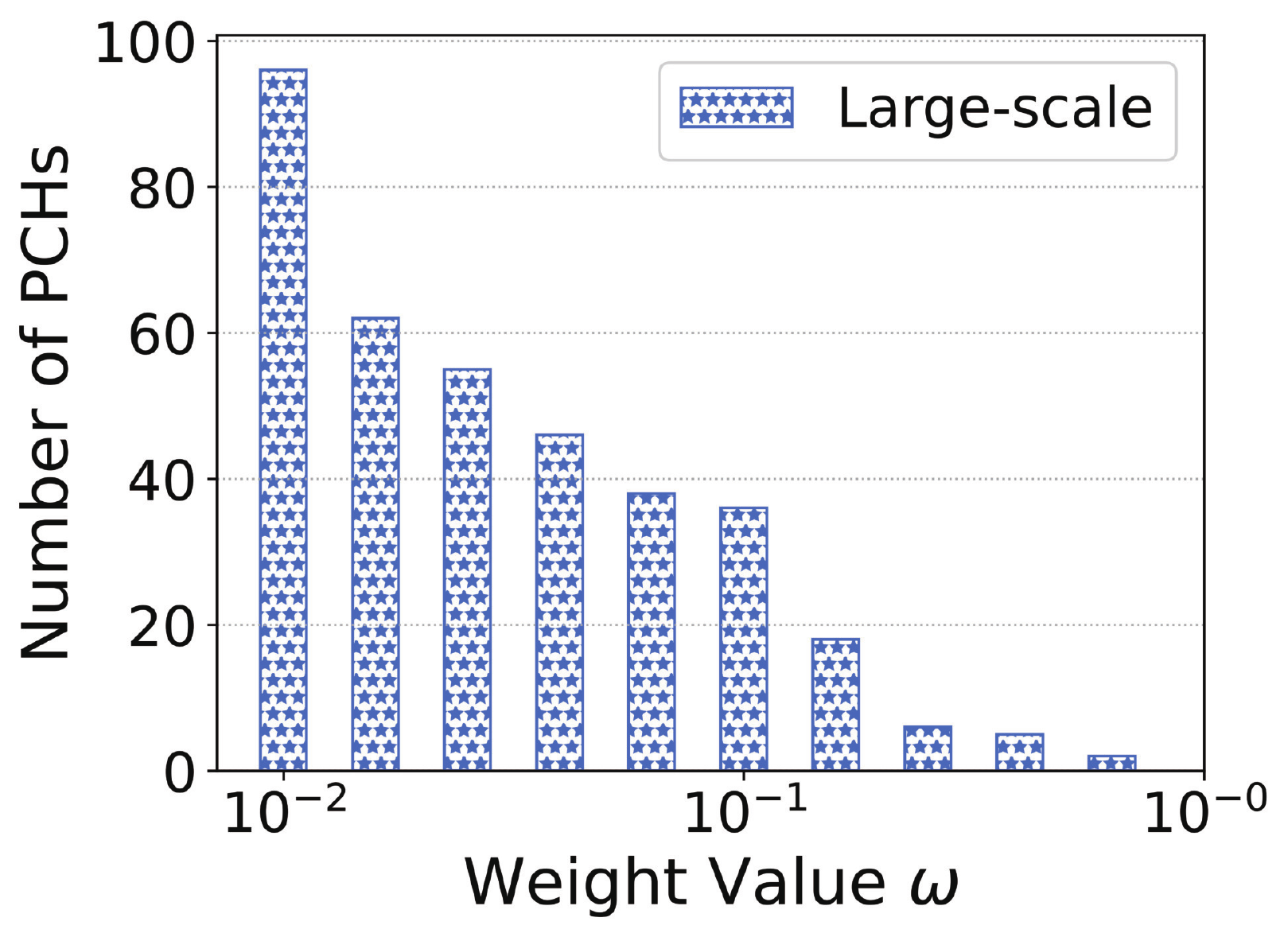}
			\label{number-large}
			\vspace{-1cm}
		\end{minipage}
	}%
	\hspace{1mm}
	\subfigure[Small-scale costs]{
		\begin{minipage}[t]{0.15\linewidth}
			\centering
			\includegraphics[width=1.24in]{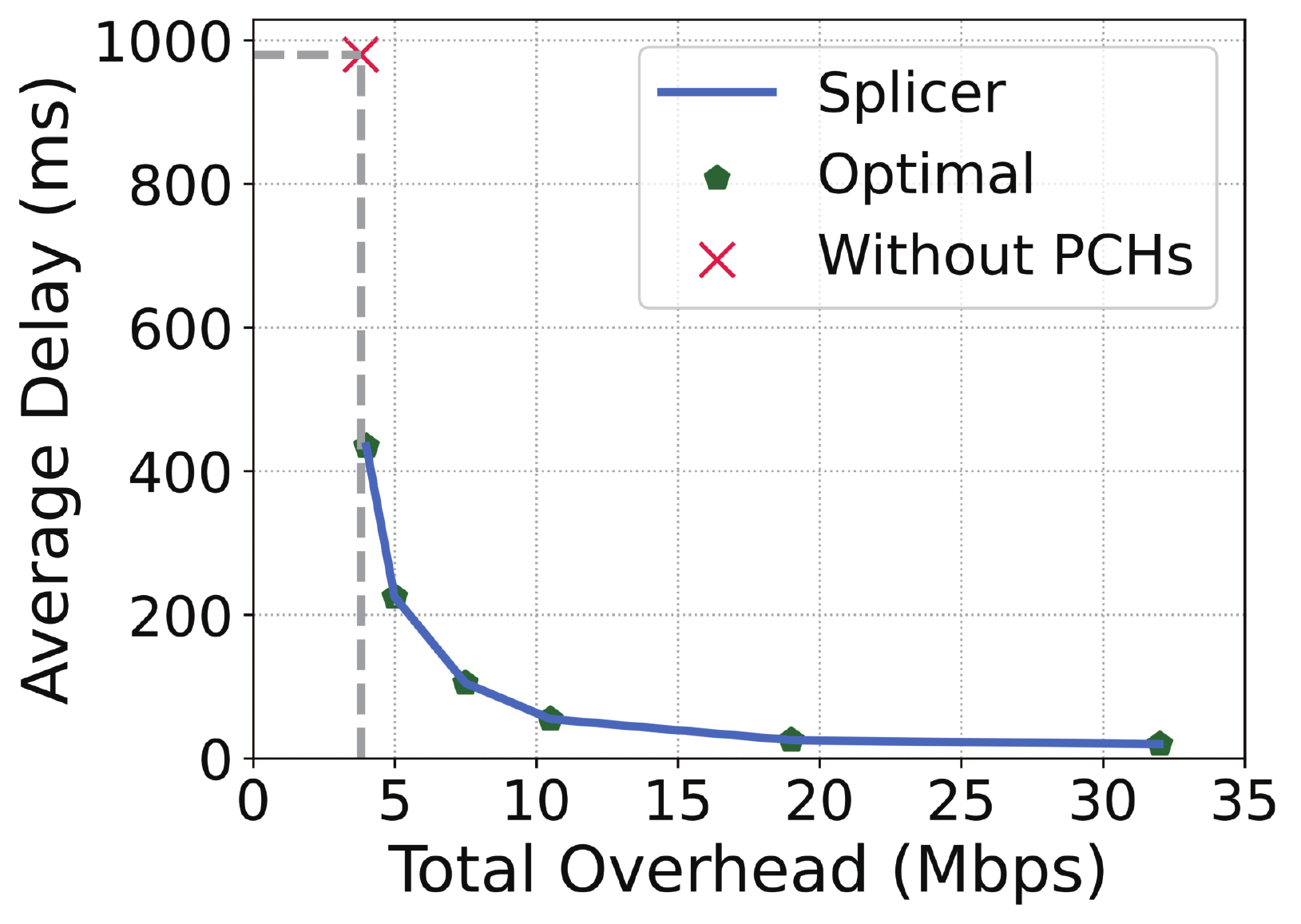}
			\label{cost-small}
			\vspace{-1cm}
		\end{minipage}
	}%
	\hspace{1mm}
	\subfigure[Large-scale costs]{
		\begin{minipage}[t]{0.15\linewidth}
			\centering
			\includegraphics[width=1.25in]{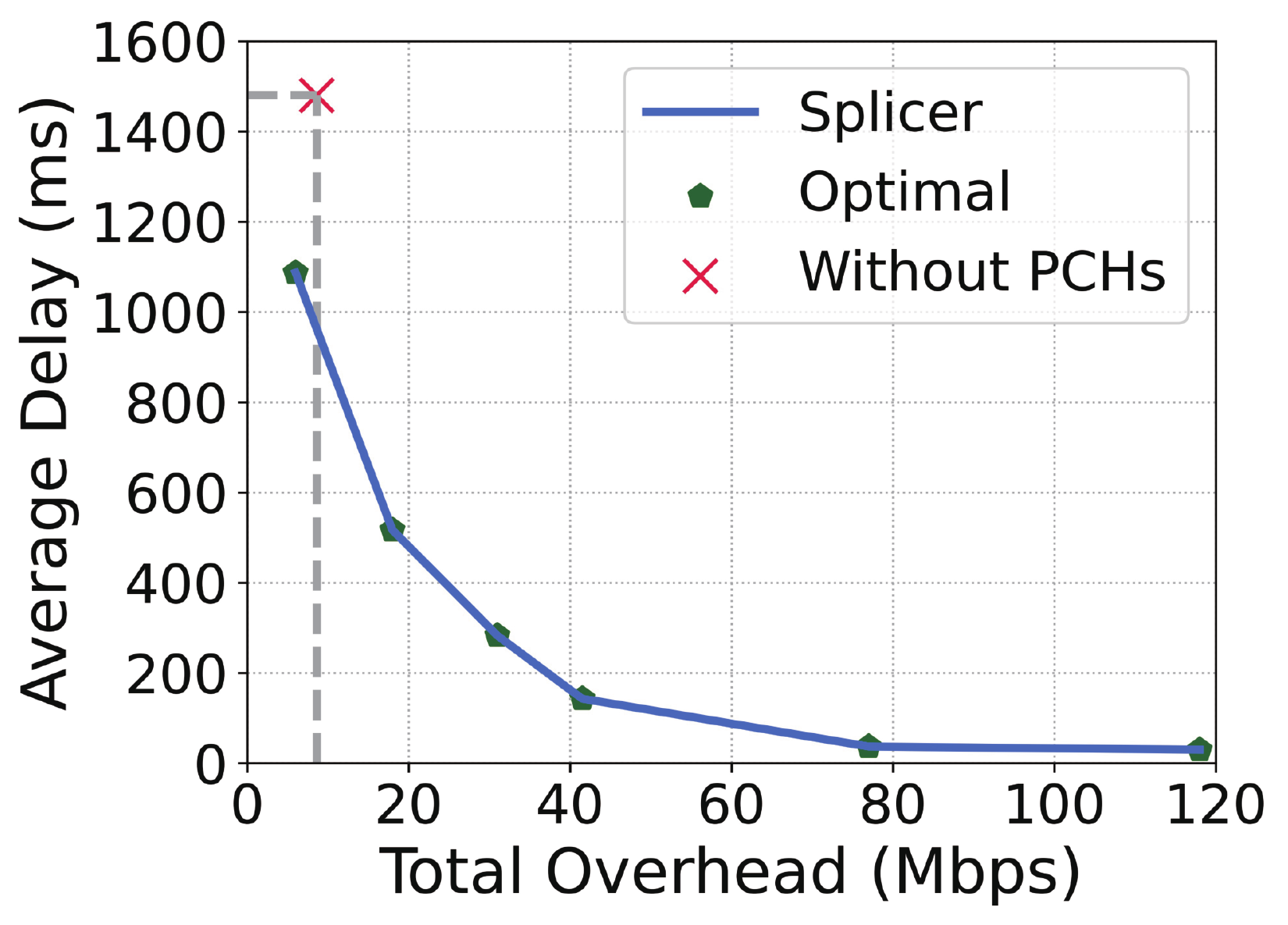}
			\label{cost-large}
			\vspace{-1cm}
		\end{minipage}
	}%
	\centering
	\vspace{-0.3cm}
	\caption{Evaluation of smooth node placement.}
	\label{placement}
	\vspace{-0.5cm}
\end{figure*}

\begin{table*}[t]
	\centering
	\caption{The influence of the different choices in routing for Splicer.}\vspace{-0.25cm}
	\resizebox{1.7\columnwidth}{!}{
		\begin{tabular}{|c|c|c|c|c|c|c|c|c|c|c|c|c|}
			\hline
			\multirow{2}[4]{*}{Scale} & \multicolumn{4}{c|}{Path Type} & \multicolumn{4}{c|}{Path Number} & \multicolumn{4}{c|}{Scheduling Algorithm} \bigstrut\\
			\cline{2-13}          & KSP   & Heuristic & EDW   & EDS   & 1     & 3     & 5     & 7     & FIFO  & LIFO  & SPF   & EDF \bigstrut\\
			\hline
			Small & 65.20\% & 77.02\% & 85.29\% & 83.26\% & 32.31\% & 70.74\% & 86.13\% & 82.88\% & 53.81\% & 90.35\% & 76.18\% & 70.44\% \bigstrut[t]\\
			Large & 58.85\% & 76.19\% & 90.07\% & 88.31\% & 35.53\% & 65.45\% & 90.40\% & 87.72\% & 61.19\% & 93.23\% & 82.21\% & 78.26\% \bigstrut[b]\\
			\hline
		\end{tabular}%
	}
	\label{tab-choices}%
	\vspace{-0.6cm}
\end{table*}%

\textbf{Normalized throughput} is the total value of payments completed over the total value generated, normalized by the maximum throughput. A high throughput demonstrates the model's ability for massive concurrent transactions and further corroborates TSR to prove the stability of the model. Fig. \ref{fig_updateTime-throughput} and \ref{fig_updateTime-throughput2} show the normalized throughput of Splicer is an average of 29.3\% higher than the other four schemes. Splicer's throughput improvement is more significant in large-scale networks (average 37.7\%). Compared to Spider, the normalized throughput of Splicer is 8.5\% and 15.6\% higher on average, respectively. Compared with A$^2$L, the improvement of Splicer is more prominent, which are 28.2\% and 48.4\%, respectively. As the increased update time, more and more transactions are getting closer to the deadline, so the probability of transaction failure is higher. Because A$^2$L lacks a scalable routing strategy design, it is more affected by this factor. Therefore, considering the TSR and throughput, we choose the median of 200 ms as the update time for Splicer.

The above results show that Splicer can significantly improve performance scalability compared to state-of-the-arts. In large-scale networks, the performance improvement effect of Splicer is more significant. Additionally, Splicer's placement optimization makes communication costs smaller (see \S \ref{Eva_SN_Pla}). Thus in large-scale low-power scenarios, we recommend Splicer.

\subsection{Evaluation of Smooth Node Placement} \label{Eva_SN_Pla}
%\vspace{-0.1cm}
We evaluate the placement of smooth nodes as in Fig. \ref{placement}.

\textbf{Efficiency tradeoff.} Fig. \ref{fig_para-cost} and \ref{fig_tradeoff} show the influence of weight value on the costs in the small-scale network. Fig. \ref{fig_para-cost}  shows that running the PCHs' average balance cost varies with the weight value of $\omega$ proposed in \S \ref{detail_problem}. Overall, the performance of our model is close to the optimal for almost all values of $\omega$. This indicates that our model successfully simulates the relationship between the two communication costs of the network. Fig. \ref{fig_tradeoff} further shows the tradeoff between the two costs. The annotation for the nodes in the figure is the corresponding weight $\omega$ and the \textit{number} of smooth nodes (e.g. 4 smooth nodes for $\omega = 0.04$). Management costs are incurred between smooth nodes and clients, and synchronization costs are only incurred between smooth nodes. PCNs have different affordability for these two costs. For example, because of the strong computational capability of the PCHs, PCNs can bear a high cost of synchronization. Clients may be IoT nodes so that PCNs can carry less management cost. Therefore, Splicer can adjust both costs by increasing or decreasing the number of smooth nodes in the voting smart contract suitably based on the results. In addition, the influence curves of weight value on costs in the large-scale network are similar to those in the small-scale network. The difference is that large-scale networks require more smooth nodes than small-scale networks. Fig. \ref{number-small} and \ref{number-large} show the number of smooth nodes for different weight values $\omega$ in small and large network scales. When management cost is preferred, Splicer deploys more smooth nodes, reducing the communication overhead and latency of the smooth nodes managing the clients and vice-versa.

\iffalse
\begin{figure}[t]
	\vspace{-0.4cm}
	\centering
	\subfigure[Balance cost]{
		\begin{minipage}[t]{0.45\linewidth}
			\centering
			\includegraphics[width=1.4in]{pic/para-cost}
			\label{fig_para-cost}
			\vspace{-0.2cm}
		\end{minipage}%
	}%
	\hspace{1mm}
	\subfigure[Trade-off between costs]{
		\begin{minipage}[t]{0.45\linewidth}
			\centering
			\includegraphics[width=1.4in]{pic/tradeoff}
			\label{fig_tradeoff}
			\vspace{-0.2cm}
		\end{minipage}%
	}%
	\centering
	\vspace{-0.3cm}
	\caption{The influence of weight value on the costs.}
	\label{small_tradeoff}
	\vspace{-0.7cm}
\end{figure}

\begin{figure}[h]
	\vspace{-0.5cm}
	\centering
	\subfigure[In small-scale networks]{
		\begin{minipage}[t]{0.45\linewidth}
			\centering
			\includegraphics[width=1.35in]{pic/number-small}
			\label{number-small}
			\vspace{-0.4cm}
		\end{minipage}%
	}%
	\hspace{1mm}
	\subfigure[In large-scale networks]{
		\begin{minipage}[t]{0.45\linewidth}
			\centering
			\includegraphics[width=1.4in]{pic/number-large}
			\label{number-large}
			\vspace{-0.4cm}
		\end{minipage}%
	}%
	\centering
	\vspace{-0.2cm}
	\caption{Number of smooth nodes for different weight value $\omega$.}
	\label{nodes_number}
	\vspace{-0.3cm}
\end{figure}
\fi
\textbf{PCH placement effectiveness.} Fig. \ref{cost-small} and \ref{cost-large} show the average transaction delay and total traffic overhead with and without PCHs (i.e., comparing distributed routing and source routing decisions) to demonstrate the effectiveness of smooth nodes. We depict the delay-overhead curves by iterating the weight values in small and large-scale networks. If without smooth nodes, the average delay and traffic overhead are fixed. With similar total overhead, the average delay of Splicer is significantly lower than without smooth nodes. Overall, Splicer achieves 80.9\% lower latency than schemes without PCHs (e.g., Spider). Appropriate placement of some PCHs can reduce the total overhead of the network. Besides, Splicer can tolerate more traffic overhead to reduce transaction latency further.
\iffalse
\begin{figure}[t]
	\vspace{-0.4cm}
	\centering
	\subfigure[Costs in the small-scale network]{
		\begin{minipage}[t]{0.45\linewidth}
			\centering
			\includegraphics[width=1.4in]{pic/cost-small}
			\label{cost-small}
			\vspace{-0.2cm}
		\end{minipage}%
	}%
	\hspace{1mm}
	\subfigure[Costs in the large-scale network]{
		\begin{minipage}[t]{0.45\linewidth}
			\centering
			\includegraphics[width=1.4in]{pic/cost-large}
			\label{cost-large}
			\vspace{-0.2cm}
		\end{minipage}%
	}%
	\centering
	\vspace{-0.3cm}
	\caption{The delay and overhead in small and large network scales.}
	\label{nodes_cost}
	\vspace{-0.7cm}
\end{figure}
\fi

\subsection{Routing Choices in Splicer} \label{Choices}
%\vspace{-0.3cm}
In addition, we study the influence of the different choices in routing on the TSR, as shown in Table \ref{tab-choices}.

\textbf{Path type.} We evaluate the performance by choosing different types of routing paths for each source-destination pair in two network scales. \textbf{KSP} means the k-shortest paths. \textbf{Heuristic} method picks 5 feasible paths with the highest channel funds. \textbf{EDW} represents the edge-disjoint widest paths. \textbf{EDS} means the edge-disjoint shortest paths. The results show that EDW outperforms other approaches in two network scales. Due to the channel size following the heavy-tailed distribution, the widest paths can better utilize the network's capacity.

\textbf{Path number.} We evaluate the performance in the different numbers of EDW paths. The TSR increases as the number of paths increases, suggesting that more paths utilize the network's capacity better. It is noted that the TSR declines slightly as the number of paths increases to 7, due to the high computational complexity that causes the performance bottleneck. Therefore, we choose 5 routing paths in Splicer.

\textbf{Scheduling algorithm.} We change the scheduling methods for the waiting queue. There are four kinds of methods: first in first out (\textbf{FIFO}); last in first out (\textbf{LIFO}); smallest payments first (\textbf{SPF}); and earliest deadline first (\textbf{EDF}). The results show that LIFO represents 10-40\% higher than other methods since it first processes the transactions far from the deadline. FIFO and EDF process transactions closest to their deadlines, resulting in poor transaction performance due to more failures. Though SPF shows the second well performance, the large transactions pile up and take up a lot of channel funds, leading to a lower transaction success ratio.
\iffalse
\vspace*{-0.8\baselineskip}
\begin{table}[h]
	\centering
	\caption{The influence of the different choices in routing for Splicer.}\vspace{-0.25cm}
		\begin{tabular}{c|c c c c}
			\hline
			\multicolumn{1}{c|}{\multirow{1}[4]{*}{Network Scale}} & \multicolumn{4}{c}{Path Type} \\
			\cline{2-5}          & KSP   & Heuristic & EDW   & EDS \\
			\hline
			Small & 65.20\% & 77.02\% & 85.29\% & 83.26\% \\
			
			Large & 58.85\% & 76.19\% & 90.07\% & 88.31\% \\
			\hline
			\multicolumn{1}{c|}{\multirow{1}[4]{*}{Network Scale}} & \multicolumn{4}{c}{Path Number} \\
			\cline{2-5}          & 1     & 3     & 5     & 7 \\
			\hline
			Small & 32.31\% & 70.74\% & 86.13\% & 82.88\% \\
			
			Large & 35.53\% & 65.45\% & 90.40\% & 87.72\% \\
			\hline
			\multicolumn{1}{c|}{\multirow{1}[4]{*}{Network Scale}} & \multicolumn{4}{c}{Scheduling Algorithm} \\
			\cline{2-5}          & FIFO  & LIFO  & SPF   & EDF \\
			\hline
			Small & 53.81\% & 90.35\% & 76.18\% & 70.44\% \\
			
			Large & 61.19\% & 93.23\% & 82.21\% & 78.26\% \\
			\hline
		\end{tabular}%
	\label{tab-choices}%
	\vspace{-0.5cm}
\end{table}%
\fi

\section{Related Work}\label{section:Relatedwork}
\textbf{Payment channel hubs:} TumbleBit \cite{Ethan2017TumbleBit} presents a cryptographic protocol for the PCHs, which maintains multi-channels to reduce the routing complexity and make the transactions \textit{unlinkable} (i.e., the hubs do not know the two parties of transactions). But TumbleBit relies on scripting-based functionality, and the communication complexity increases linearly with the security parameter. A$\rm ^{2}$L \cite{Tairi} proposes a novel cryptographic primitive to improve TumbleBit, which provides better backward compatibility and efficiency. Commit-chains \cite{Rami2018CommitChains} process off-chain transactions in a pattern similar to PCHs: a centralized operator maintains a service for multiple users. It provides a tradeoff between channel establishment cost, user churn, collateral management, and decentralization. Perun \cite{Dziembowski2019Perun} proposes a smart contract-based method to build virtual PCHs to reduce communication complexity but at the cost of losing unlinkability. However, they do not consider the placement of the PCH. The PCH placement problem can significantly affect the transaction latency and communication overhead of PCNs.

\textbf{Layer-2 source routing:} Flare \cite{Flare2016} describes a hybrid routing algorithm, which seeks to optimize the average time of finding a payment route. Revive \cite{KhalilG17} proposes the first rebalancing scheme for PCNs, which rebalances the channels where node funds are unbalanced. But the routing algorithm proposed by Revive relies on a trusted third party, which is vulnerable to the single point of attack. Sprites \cite{0001BBKM19} tries to reduce the worst-case ``collateral cost" of an off-chain linked transaction. \iffalse Though \cite{Flare2016, 0001BBKM19} have improved routing performance, without considering maintaining payment channels \textit{balanced}. Flash \cite{WangXJW19} proposes a dynamic routing scheme based on off-chain transaction characteristics to seek a tradeoff between path optimality and detection cost, but it lacks load balancing consideration.\fi Spider \cite{Sivaraman2020HighTC} presents a multi-path routing scheme based on ``packetization", which can achieve high-throughput routing in the PCNs. However, the sender computes the routing path to the recipient. In a large-scale PCN, the performance requirements of the sender can be quite demanding. In addition, there are other new layer-2 scaling solutions, such as Rollups \cite{Rollup} proposed on Ethereum, that increase throughput by batching transactions at the expense of high latency. However, it is outside the scope of our discussion on PCN architecture. 

\iffalse
\begin{table}[t]
	\vspace{-0.5cm}
	\begin{center}
	\caption{State-of-the-art PCNs schemes}	\vspace{-0.6cm}
	\resizebox{1\columnwidth}{!}{
		\begin{threeparttable}
		\begin{tabular}{l c c c c c c}
			\hline
			\diagbox{Properties}{Literatures} & \makecell[c]{\cite{LN, RD}\\ \cite{Flare2016, 0001BBKM19}} & \makecell[c]{\cite{Ethan2017TumbleBit, Rami2018CommitChains} \\ \cite{Dziembowski2019Perun, Tairi}}
			&\cite{KhalilG17} &\cite{WangXJW19} &\cite{Sivaraman2020HighTC} &Splicer\cellcolor{yellow}\\
			\hline
			\rowcolor{lightgray}Outsourcing routing\tnote{1} &\emptycirc &\fullcirc &\emptycirc &\emptycirc &\emptycirc &\fullcirc\\

			\rowcolor{lightgray}Optimal PCH placement\tnote{1} &\emptycirc &\emptycirc &\emptycirc &\emptycirc &\emptycirc &\fullcirc\\

			\rowcolor{lime}Improving throughput\tnote{2} &\fullcirc &\fullcirc &\fullcirc &\fullcirc &\fullcirc &\fullcirc\\

			\rowcolor{lime}Support large transactions\tnote{2} &\emptycirc &\emptycirc &\emptycirc &\fullcirc &\fullcirc &\fullcirc\\

			\rowcolor{lime}Channel funds balancing\tnote{2} &\emptycirc &\emptycirc &\fullcirc &\emptycirc &\fullcirc &\fullcirc\\
			\hline
		\end{tabular}	
		\begin{tablenotes}
			\footnotesize
			\item[1] Using PCHs to improve routing efficiency; $^2$ Improving the scalability of PCNs.  %此处加入注释*信息 %这样不必每条换行
		\end{tablenotes}
	\label{table_dif}
	\end{threeparttable}
	}
	\end{center}
	\vspace{-0.9cm}
\end{table}
\fi

\section{Conclusion} \label{conclu}
%\vspace{-0.3cm}
We propose a distributed routing mechanism with high scalability based on multiple PCHs, to seek a new tradeoff between decentralization and scalability for PCNs. PCHs route transaction flows in PCNs in an optimal deadlock-free manner. We formulate the PCH placement problem for network scalability and propose two solutions in small and large-scale networks. To improve performance scalability, we design the rate-based routing and congestion control protocol on PCHs. Extensive experimental results show that Splicer outperforms the state-of-the-arts.

\section*{Acknowledgments}
This work was supported in part by National Key R\&D Program of China (No. 2020YFB1005500); the National Natural Science Foundation of China (No. 61972310, 61972017, 62072487, 61941114); the Beijing Natural Science Foundation (No. M21036); the Populus Euphratica Found, China (No. CCFHuaweiBC2021008); the Innovation Fund of Xidian University, China (No. YJSJ23005). %The corresponding author is Xuewen Dong (xwdong@xidian.edu.cn).

\bibliographystyle{IEEEtran}
%\newpage
\normalem
\bibliography{test}

\end{document}